\documentclass[10pt,journal,compsoc]{IEEEtran}
\IEEEoverridecommandlockouts

%
\ifCLASSOPTIONcompsoc
  \usepackage[nocompress]{cite}
\else
  \usepackage{cite}
\fi
\usepackage{amsmath,amssymb,amsfonts}
\usepackage{algorithmic}
\usepackage{graphicx}
\usepackage{textcomp}
\usepackage[absolute]{textpos}

\usepackage{enumitem}
\usepackage{color}
\usepackage{multirow}
\usepackage{lipsum}
\usepackage{listings}
\usepackage{graphicx}
\usepackage{pdfpages}
\usepackage{verbatim}
\usepackage{pifont}
\usepackage{etoolbox}
\usepackage{textcomp}
\usepackage{hyperref}
\usepackage{url}
\usepackage{caption}
\usepackage{booktabs} 
\usepackage{siunitx} 
\usepackage{etoolbox}

\usepackage[normalem]{ulem}
\useunder{\uline}{\ul}{}

\usepackage[switch]{lineno} 

\usepackage[linesnumbered,titlenumbered,ruled,vlined,resetcount,algosection]{algorithm2e}
\usepackage{xcolor}
\usepackage{color}
\def\BibTeX{{\rm B\kern-.05em{\sc i\kern-.025em b}\kern-.08em
    T\kern-.1667em\lower.7ex\hbox{E}\kern-.125emX}}

\usepackage{amsthm}
\newtheorem{theorem}{Theorem}
\newtheorem{corollary}{Corollary}

\usepackage{hhline}
\usepackage{xcolor}
\usepackage{colortbl}
\usepackage{threeparttable}
\usepackage{subfig}
\usepackage{comment}
\usepackage{enumitem}
\usepackage{tikz}
\newcommand*\circled[1]{\tikz[baseline=(char.base)]{
            \node[shape=circle,fill,inner sep=0.8pt] (char) {\footnotesize\textcolor{white}{#1}};}}

\def\BibTeX{{\rm B\kern-.05em{\sc i\kern-.025em b}\kern-.08em
    T\kern-.1667em\lower.7ex\hbox{E}\kern-.125emX}}




\newcommand{\pname}{\mbox{ZCCL}}

\newcommand{\fzlight}{\mbox{\textit{fZ}-light}}

\begin{document}

\Urlmuskip=0mu plus 1mu

\title{ZCCL: Significantly Improving Collective Communication With Error-Bounded Lossy Compression}

\author{Jiajun Huang, Sheng Di,~\IEEEmembership{Senior Member,~IEEE}, Xiaodong Yu, Yujia Zhai, Zhaorui Zhang, Jinyang Liu, Xiaoyi Lu, Ken Raffenetti, Hui Zhou, Kai Zhao, Khalid Alharthi, Zizhong Chen,~\IEEEmembership{Senior Member,~IEEE}, Franck Cappello,~\IEEEmembership{Fellow,~IEEE}, Yanfei Guo, Rajeev Thakur,~\IEEEmembership{Fellow,~IEEE}

\IEEEcompsocitemizethanks{
\IEEEcompsocthanksitem Jiajun Huang, Yujia Zhai, and Zizhong Chen are affiliated with the University of California, Riverside, CA 92521. Sheng Di, Ken Raffenetti, Hui Zhou, Franck Cappello, Yanfei Guo, and Rajeev Thakur are affiliated with Argonne National Laboratory, Lemont, IL 60439. Xiaodong Yu is affiliated with Stevens Institute of Technology, Hoboken, NJ 07030. Zhaorui Zhang is affiliated with The Hong Kong Polytechnic University, Kowloon, Hong Kong. Jinyang Liu is affiliated with University of Houston,
Houston, TX 77204. Xiaoyo Lu is is affiliated with University of California, Merced, CA 95343. Kai Zhao is affiliated with Florida State University, Tallahassee, FL 32306. Khalid Alharthi is affiliated with Department Of Computer Science, College of Computing And Information Technology, University Of Bisha, Bisha 61922, P.O. Box 551, Saudi Arabia. 
}

}



\IEEEtitleabstractindextext{%

\begin{abstract}

With the ever-increasing computing power of supercomputers and the growing scale of scientific applications, the efficiency of MPI collective communication turns out to be a critical bottleneck in large-scale distributed and parallel processing. The large message size in MPI collectives is particularly concerning because it can significantly degrade overall parallel performance. To address this issue, prior research simply applies off-the-shelf fixed-rate lossy compressors in the MPI collectives, leading to suboptimal performance, limited generalizability, and unbounded errors. In this paper, we propose a novel solution, called {\pname}, which leverages error-bounded lossy compression to significantly reduce the message size, resulting in a substantial reduction in communication costs. The key contributions are three-fold. (1) We develop two general, optimized lossy-compression-based frameworks for both types of MPI collectives (collective data movement as well as collective computation), based on their particular characteristics. Our framework not only reduces communication costs but also preserves data accuracy. (2) We customize {\fzlight}, an ultra-fast error-bounded lossy compressor, to meet the specific needs of collective communication. (3) We integrate {\pname} into multiple collectives, such as Allgather, Allreduce, Scatter, and Broadcast, and perform a comprehensive evaluation based on real-world scientific application datasets. Experiments show that our solution outperforms the original MPI collectives as well as multiple baselines by 1.9--8.9$\times$.

\end{abstract}

\begin{IEEEkeywords}
Error-bounded Lossy Compression, Collective Communication, Distributed Computing, Parallel Algorithm
\end{IEEEkeywords}
}
\maketitle

\maketitle


\section{Introduction}
MPI collectives provide high-performance collective communication in distributed systems, making a significant impact on various research fields such as scientific applications, distributed machine learning, and others~\cite{awan2017s, wang2006pelegant, abadi2016tensorflow, ayala2019impacts, jain2019scaling, abdelmoniem2021efficient}. With the advent of exascale computing and deep learning applications, the demand for large-message MPI collectives has increased. For example, in image classification tasks, VGG19~\cite{simonyan2015very} and ResNet-50~\cite{he2016deep} have 143 million and 25 million parameters, respectively, with communication overheads of 83\% and 72\%~\cite{abdelmoniem2021efficient}. Therefore, optimizing MPI collectives for large messages has become essential~\cite{chunduri2018characterization, Bayatpour2018SALaR, patarasuk2009bandwidth}.

MPI collectives consist of both internode communication and intranode communication, and the former is often the major concern. The overall collective performance is usually limited by the efficiency of internode communication because of limited network bandwidth. Therefore, optimizing internode collective communication is critical to improving the overall performance of MPI collectives. This topic has been a focus of research for decades, with state-of-the-art algorithms achieving notable improvements. However, with the increasing demand for large-message MPI collectives, further optimization remains necessary~\cite{Alm05BlueGene, thakur2005optimization, patarasuk2009bandwidth}. Lossy compression \cite{Di2016SZ,Tao2017SZ,Zhao2020SZauto,Lindstrom2014ZFP} (rather than lossless compression \cite{Deutsch1996gzip, Gaillyzlib,Collet2015zstd}) is a promising solution to mitigate this MPI collective performance issue because of its ability to significantly reduce the message size.

Although lossy compression has been widely used to resolve many other scalability issues in high-performance computing, such as reducing memory footprint \cite{quant-compression}, reducing storage space \cite{nbody-compression,mdz}, and avoiding duplicated computation \cite{pastri}, only a few studies have explored its use in this direction, and all expose certain limitations. To elaborate, Zhou et al. \cite{Zhou2021GPUCOMPRESSION} proposed GPU-compression enhanced point-to-point communication by integrating MPC \cite{yang2015mpc} and 1D fixed-rate ZFP \cite{Lindstrom2014ZFP} into MVAPICH2 \cite{SHM-MVAPICH2}. Their approach, referred to as \textit{CPRP2P}, simply involved compressing the messages before transmission and decompressing them after reception, leading to significant performance overhead due to the non-negligible time required for compression and decompression. Meanwhile, Zhou et al. \cite{Zhou2022GPUCOMPRESSIONALLTOALL, Zhou2022HiPC} proposed several additional approaches to improve multiple MPI collectives using 1D fixed-rate ZFP \cite{Lindstrom2014ZFP} on GPUs. Their methods, however, focus on fixed-rate compression\footnote{Fixed-rate compression means that the lossy compression would be performed based on a user-specified fixed compression ratio.}, introducing two major limitations: (1) compression errors cannot be bounded, leading to an uncontrolled accuracy, and (2) the compression quality is considerably lower compared to the fixed-accuracy mode\footnote{Fixed-accuracy, also known as error-bounded lossy compression, compresses data based on a user-specified error bound.} in ZFP, as demonstrated by prior research~\cite{fraz, huang2023ccoll}.

The aforementioned limitations of compression-enabled MPI collective algorithms motivate us to develop a new efficient MPI collective framework which leverages lossy compression technique to significantly improve the MPI collective performance. However, this brings in three direct technical challenges. \circled{A} Devising a general framework that can effectively hide the communication cost and choose an appropriate timing to call lossy compression is non-trivial.
\circled{B} The data loss nature of the lossy compression brings up a critical concern on the accuracy of collective operations. \circled{C} Existing lossy compressors are not designed for the collective context, leading to suboptimal collective performance because of unnecessary overheads when they are directly applied in MPI collectives \cite{Di2016SZ, Tao2017SZ, Liang2018SZ, Yu2022SZx, Lindstrom2014ZFP}.

Our developed framework is named as compression-facilitated MPI collective framework ({\pname}), which can address the aforementioned limitations and challenges. To the best of our knowledge, this is the \textit{first-ever} framework that provides a general high-performance solution for compression-integrated MPI collectives. Moreover, this is the first accuracy-aware design, which ensures the accelerated collective performance with error-bounded lossy compression does not compromise data quality. To be more specific, our contributions include:
\begin{itemize}
    \item To address challenge \circled{A}, we introduce two efficient frameworks, which can significantly accelerate both types of MPI collectives. Specifically, the first framework notably diminishes the compression overhead in the collective data movement operations (e.g., Scatter, Bcast and All-gather), thereby achieving substantial performance improvement. The second one hides communication inside of compression in collective computation (e.g., Reduce-scatter), which in turn enhances the performance of collective computation. Moreover, these two frameworks can be combined together to speed up more advanced MPI collective operations such as All-reduce. 
    \item To address challenge \circled{B}, we devise several strategies to effectively control error propagation within the {\pname} framework. Specifically, we employ error-bounded lossy compression, regulate the number of compression operations, and develop an efficient method to resolve imbalances in collective communication caused by error-bounded lossy compression. Through in-depth mathematical analysis, we prove that our {\pname} framework achieves well-bounded data accuracy.
    \item To address challenge \circled{C}, we select the most suitable error-bounded lossy compressor for collective communication, considering factors such as compression throughput, compression ratio, and compression quality. We thoroughly analyze the newly developed {\fzlight} compressor~\cite{SZp} and compare it with SZx~\cite{Yu2022SZx}, which is used in the state-of-the-art C-Coll framework~\cite{huang2023ccoll}. After determining that {\fzlight} is the most suitable option, we customize it to meet the specific needs of collective communication. Specifically, we redesign the compression workflow of {\fzlight} and implement a pipelined version to overlap compression and communication. This optimization reduces the communication cost in our {\pname} framework by up to 3.3$\times$.
    \item To demonstrate the generality of our design, we integrate {\pname} into multiple collectives, including Allgather, Allreduce, Scatter, and Broadcast, and evaluate performance using real-world scientific application datasets. Experiments on 128 Intel Broadwell compute nodes from a supercomputer show that ZCCL-accelerated collectives achieve significant speedups over the CPRP2P and C-Coll baselines, outperforming MPI\_Allreduce, MPI\_Scatter, and MPI\_Bcast by up to 3.6$\times$, 5.4$\times$, and 8.9$\times$, respectively. Additionally, we validate the practical effectiveness of ZCCL-accelerated Z-Allreduce using a real-world use case (image stacking analysis), which shows up to 3.0$\times$ performance gain over MPI\_Allreduce while maintaining high data integrity and accuracy during collective operations.
    
\end{itemize}

The rest of the paper is organized as follows: we introduce background and related work in Section \ref{sec:background} and detail our design and optimization in Section \ref{sec-design_and_optimizations}. Evaluation results are presented in Section \ref{exp-setup-sec} followed by conclusion and future work in Section \ref{sec:conclusion}.

\section{Background and Related Work}
\label{sec:background}
In this section, we discuss the background and related work. We first introduce MPI collective communication, followed by a discussion on high-speed lossy compressors and their integration with MPI implementations. The focus of our study is on lossy compression. This emphasis is due to the significantly lower compression ratios observed with lossless methods when applied to scientific datasets~\cite{Di2016SZ,Tao2017SZ,Huang-Exploring-Wavelet-Transform}.

\subsection{MPI Collective Communication}
\label{sec:MPI-Collectives}
There are many types of MPI collective operations, which can be divided into two sub-categories --- collective data movement and collective computation according to their communication patterns.
\subsubsection{Collective data movement}
Collective data movement includes gather, allgather, scatter, all-to-all, and so on. The gather operation collects the data from different processes and stores the collected data into the root process. In comparison, allgather stores the collected data to every participated process. As an opposite of gather, the scatter operation divides the data in the root process and sends the split data to all the processes. The all-to-all operation acts as the ``allscatter", which collectively scatters data on each process to each other.

\subsubsection{Collective computation}
Allreduce/reduce are two popular collective computation operations. We use MPI\_SUM as an example to explain the working principle as it is frequently used. The reduce routine will sum up all the data entries from all the processes in the same communicator and store the sum into the root process. The Allreduce does the same thing but keeps a copy of the sum on every process in that communicator. Another widely-used operation is the reduce-scatter, which acts like a combination of the reduce and scatter: the reduced sum is scattered into all the processes.

\subsection{High-speed Lossy Compressors}

Compression developers and scientific researchers have shown significant interest in high-speed lossy compression due to its ability to achieve high compression ratios. SZ~\cite{Di2016SZ, Tao2017SZ, Liang2018SZ} is an example of a fast, error-bounded lossy compressor, with performance comparable to other compressors like FPZIP~\cite{Lindstrom2006FPZIP} and SZauto~\cite{Zhao2020SZauto}. ZFP~\cite{Lindstrom2014ZFP} is another well-known compressor, offering relatively high compression ratios and even faster speeds than SZ. However, neither can match the speed of SZx~\cite{Yu2022SZx}, which achieves a compression throughput of 700-900 MB/s on CPUs~\cite{Yu2022SZx}. Recently, an ultra-fast compressor, {\fzlight}, has been proposed, but it has yet to be compared with the state-of-the-art SZx compressor used in the C-Coll framework~\cite{huang2023ccoll}. In Section \ref{sec:high-speed-compressor}, we thoroughly compare {\fzlight} with SZx in terms of compression throughput, ratio, and quality, demonstrating that {\fzlight} is the most suitable lossy compressor for MPI collective operations. Accordingly, we develop our customized compressor based on {\fzlight} for MPI collectives, which will be detailed in Section \ref{sec:Customize-fzl-reduce}.

\subsection{Lossy Compression-enabled MPI Implementations}
Researchers have shown interest in using lossy compression to improve MPI communication performance for years. Zhou et al. proposed GPU-compression enhanced point-to-point communication \cite{Zhou2021GPUCOMPRESSION}, and several optimized MPI collective operations \cite{Zhou2022GPUCOMPRESSIONALLTOALL, Zhou2022HiPC} using 1D fixed-rate ZFP~\cite{Lindstrom2014ZFP} on GPUs, but their solutions are either showcasing limited overlapping between compression and communication or subject to the fixed-rate mode compression, leading to the substandard compression quality, as demonstrated in~\cite{huang2023ccoll}. Conversely, our general framework can optimize the collective performance of all MPI collectives while maintaining controlled accuracy.

\section{{\pname} Design and Optimization}
\label{sec-design_and_optimizations}

Figure \ref{fig:architecture} presents the overall design architecture of {\pname}. We highlight the newly designed modules as green boxes. The primary contributions lie in the performance optimization layer and the middleware layer. We carefully characterize the performance of multiple state-of-the-art error-bounded lossy compressors in the context of MPI collectives and select the best-qualified compressor -- {\fzlight}, which will be detailed in Section \ref{sec:high-speed-compressor}. We also propose a series of performance optimization strategies specifically for both of the two collective types (data movement and collective computation), as indicated in the figure. The corresponding details will be discussed in Sections \ref{sec-data-movement-framework}, \ref{sec-collective-computation}, and \ref{sec:Customize-fzl-reduce}. 

\begin{figure}[ht]
    \centering
    {\includegraphics[width=1\linewidth]{./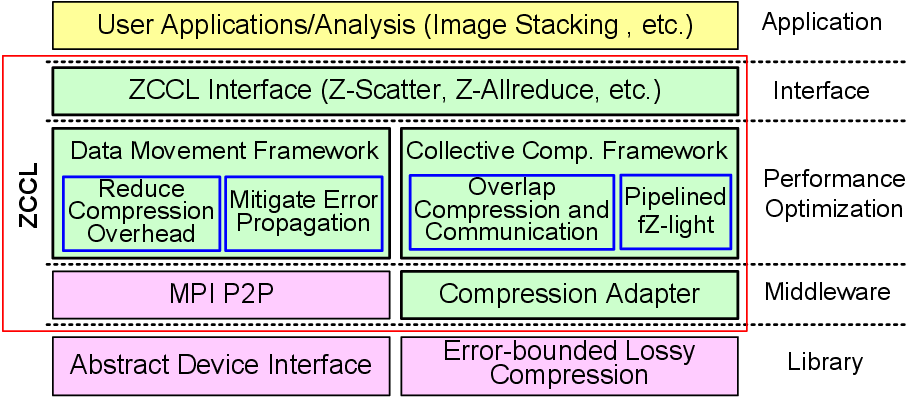}}
    \caption{Design architecture (yellow box: applications; green box: new contributed modules; purple box: third-party).} 
    \label{fig:architecture}
\end{figure}


\subsection{Two Novel Frameworks for Compression-enhanced Collectives}
To integrate lossy compression into MPI collective communication, at least two important aspects must be considered: performance and accuracy. In general, MPI collective operations can be divided into two groups: collective data movements, and collective computation. Instead of directly using the CPRP2P method, we propose two frameworks to implement collective communication for each group, which can maximize the collective operation performance.


\subsubsection{Collective data movement framework}
\label{sec-data-movement-framework}

In this subsection, we detail our strategies for addressing the issues of communication imbalance and compression overhead in our optimized collective data movement framework.
For collective data movement operations, each process in the same communicator needs to communicate with each other to exchange data. If we directly use the CPRP2P method, the sender needs to compress the data every time before sending it, and the receiver is required to decompress the data upon the data arrival. Most of the compression and decompression overheads in CPRP2P, actually, can be avoided by carefully setting the timing of compression operations, as the original data have not been modified during the intensive communication. Besides, the CPRP2P can cause unbalanced communication in that input data on different processes have various compressed data sizes. Such unbalanced communication will slow down the overall collective performance, resulting in a sub-optimal performance. With our framework, we can balance the communication with a fixed pipeline size as the compressed data sizes are decided at the beginning of the intensive communication. In the following text, we illustrate our idea with two examples: a ring-based allgather algorithm and a binomial tree broadcast algorithm. The same philosophy can be easily extended to other collective algorithms in this category.
\begin{figure}[ht]
    \centering
    {\includegraphics[width=0.9\linewidth]{./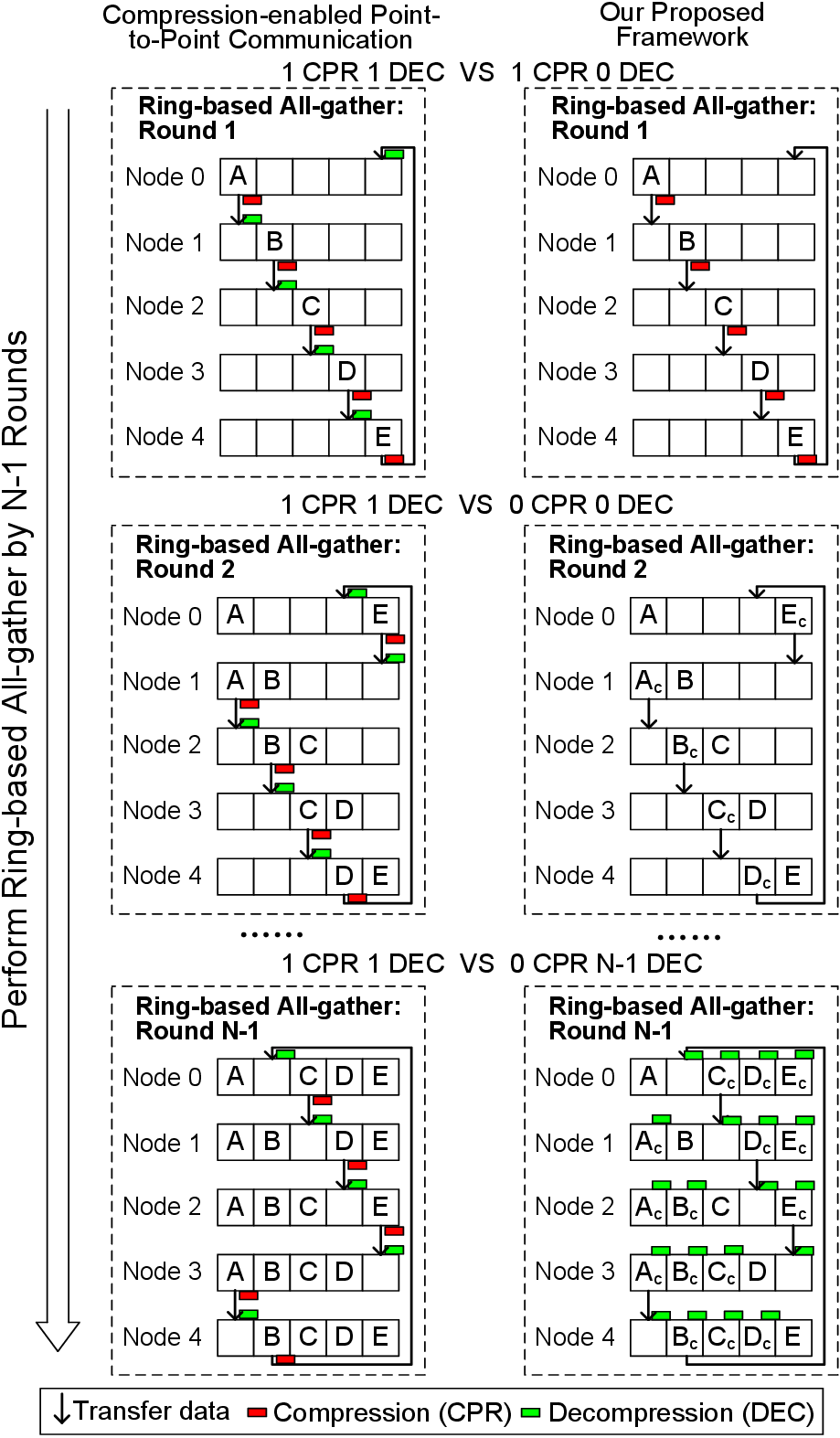}}
    \caption{High-level design of our collective data movement framework in the ring-based allgather algorithm to mitigate compression error propagation. $A$ means the original data and $A_c$ means the compressed data. This rule applies to other data chunks as well. This algorithm completes in $N$$-$1 rounds, where $N$ is the number of processes.} 
    \label{fig-allgather-design}
\end{figure}

Figure \ref{fig-allgather-design} shows the high-level comparison of our proposed framework versus the CPRP2P in the ring-based allgather algorithm. In this algorithm, $N$$-$1 rounds are required to get the gathered results on every process, where $N$ is the number of processes in the communicator. In order to use lossy compression to reduce communication cost in the ring-based algorithm, the straight-forward idea is performing compression and decompression at each round. Instead, our design does not decompress the received data until the last round, thus significantly decreasing the compression overhead from ($N$$-$1)$\cdot$$T_{chunk}$ to $T_{chunk}$, where $T_{chunk}$ is the compression cost of one chunk. When $N$ is large, our novel framework could have nearly $N$$\times$ better performance compared with CPRP2P in terms of compression. Note that the decompression cost remains the same. 

Figure \ref{fig-broadcast-design} presents a similar comparison in the binomial tree broadcast algorithm. There are $log_2 N$ rounds before the completion, where $N$ refers to the total process count. We notice that our proposed framework could reduce the compression and decompression costs from $log_2$$N$$\cdot$($T_{comp}$+$T_{decom}$) to $T_{comp}$+$T_{decom}$ and the performance improvement is $log_2 N$$\times$, where $T_{comp}$ and $T_{decom}$ are the compression time and decompression time of the data at the root process, respectively.

Besides the compression cost, the CPRP2P method can lead to an undesirable error propagation issue during collective sends and receives, as the same data chunk undergoes multiple rounds of compression and decompression. Our framework also solves this issue by compressing the same data chunk for only one time. Similar to the analysis of compression overhead, for the absolute error bounded compression, our proposed framework can decrease the worst case accuracy loss by ($N$$-$1)$\times$ and ($log_2 N$)$\times$ in the ring-based allgather and binomial tree broadcast algorithm, respectively.

\subsubsection{Collective computation framework}
\label{sec-collective-computation}
For collective computation routines, the data entries from all processes in the same communicator need to collectively compute with each other. Unlike in the case of collective data movement, the data transferred in this communication pattern can be updated. As a result, the previous framework cannot be utilized here thus we need to propose a new framework. Despite the updated transferred data precluding us from diminishing compression, we find an opportunity to hide communication inside the compression and decompression.
To clearly elaborate our proposed design, we use the ring-based reduce\_scatter algorithm as an instance. Note that this framework can be easily extended to other collective computation operations. 

\begin{figure}[ht]
    \centering
    {\includegraphics[width=0.9\linewidth]{./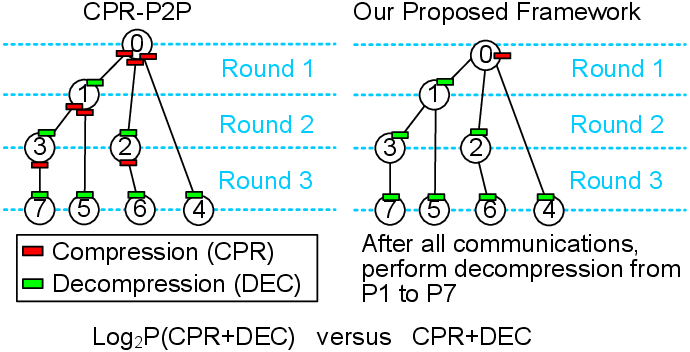}}
    \caption{High-level design of our collective data movement
    framework in the binomial tree broadcast algorithm. It completes in $log_2{N}$ rounds, where $N$ is the number of processes.} 
    \label{fig-broadcast-design}
\end{figure}

Our proposed framework for collective computation is depicted in Figure \ref{fig-reduce-scatter-design}. It employs a ring-based reduce\_scatter algorithm, where each process is required to exchange message chunks with its neighboring processes. For large datasets, these chunk sizes become substantial as they are determined by dividing the size of the input data by the number of processes. In the initial CPRP2P model, compression and decompression occur before and after any communication, respectively. Consequently, a single round incurs three types of overhead: compression/decompression for one message chunk, send/receive operations for the compressed message chunk, and reduction operation for one message chunk. Typically, the compression-related overhead is more significant than the send/receive overhead, as compressed data sizes are considerably smaller than their original counterparts. In our redesigned approach, we significantly mitigate the send/receive overhead by actively pulling communication progress within the compression and decompression phases. This substantially reduces the overall communication time.

\begin{figure}[ht]
    \centering
    {\includegraphics[width=0.9\linewidth]{./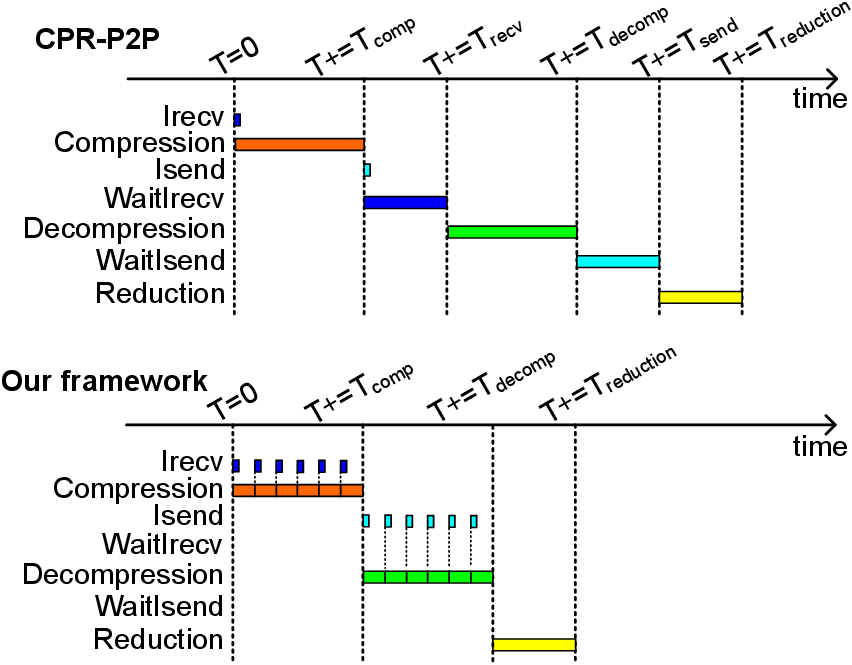}}
    \caption{High-level design of our collective computation framework in the ring-based reduce-scatter algorithm.} 
    \label{fig-reduce-scatter-design}
\end{figure}


\subsection{Theoretical Analysis of Error Propagation in {\pname}}
In this section, we prove the error-bounding nature of our {\pname} framework mathematically. In the following analysis, we assume the lossy compression error $e$ for data $x$ follows a normal distribution, without loss of generality. Specifically, the normal distribution is represented as $e \sim N(\mu, \sigma^{2})$ within the range of $[x- \widehat{e}, x+\widehat{e}]$, where $\widehat{e}$ is the compression error bound. This is exemplified in Figure \ref{fig:normal-distri}, in which we compress climate, weather, seismic wave datasets by SZ3 and ZFP. It clearly shows the normal distribution curve generated by Maximum Likelihood Estimation (MLE) fits the measured compression error values very well for different application datasets.

\vspace{-3mm}
\begin{figure}[ht]
    \centering
        \subfloat[SZ3(Climate)]        {\includegraphics[width=0.275\linewidth]{./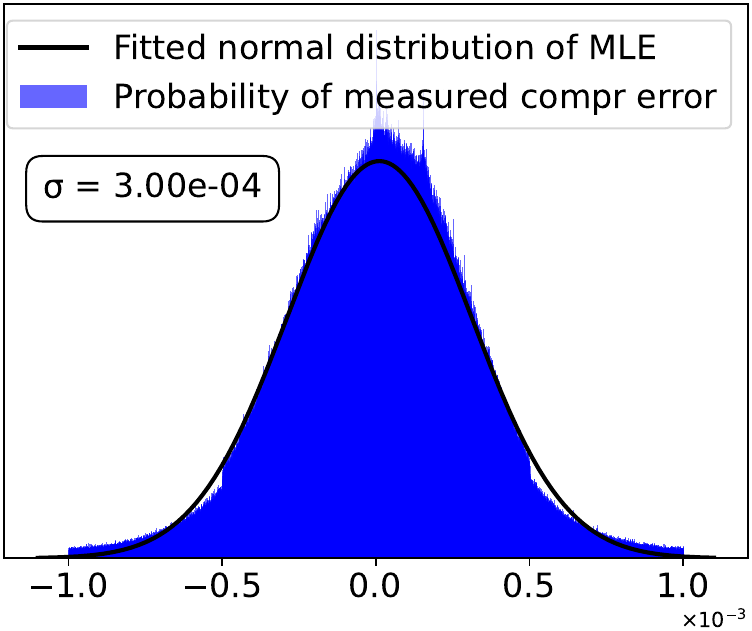}}
        \subfloat[SZ3(Weather)]        {\includegraphics[width=0.35\linewidth]{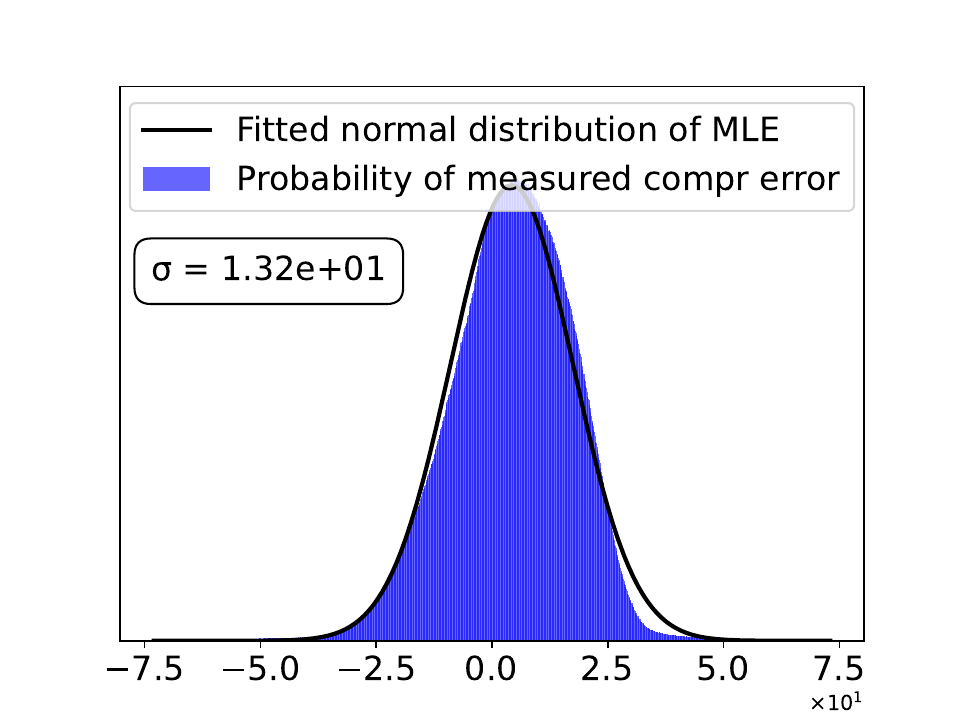}}
        \subfloat[SZ3(Seismic Wave)]   {\includegraphics[width=0.35\linewidth]{./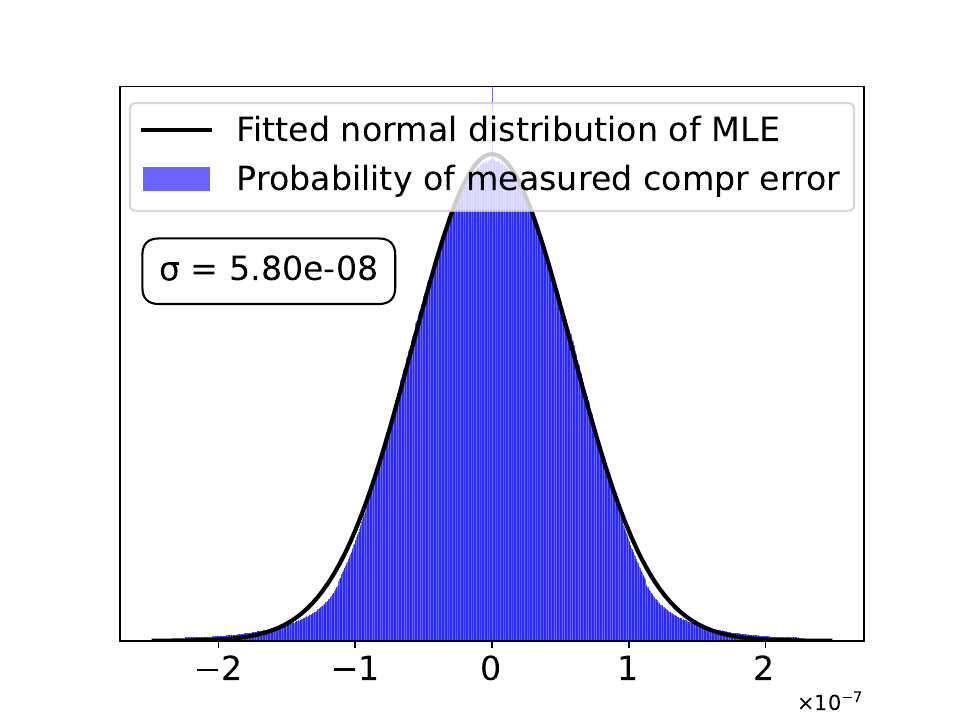}}
        \hspace{3mm}
        \subfloat[ZFP(Climate)]{\includegraphics[width=0.275\linewidth]{./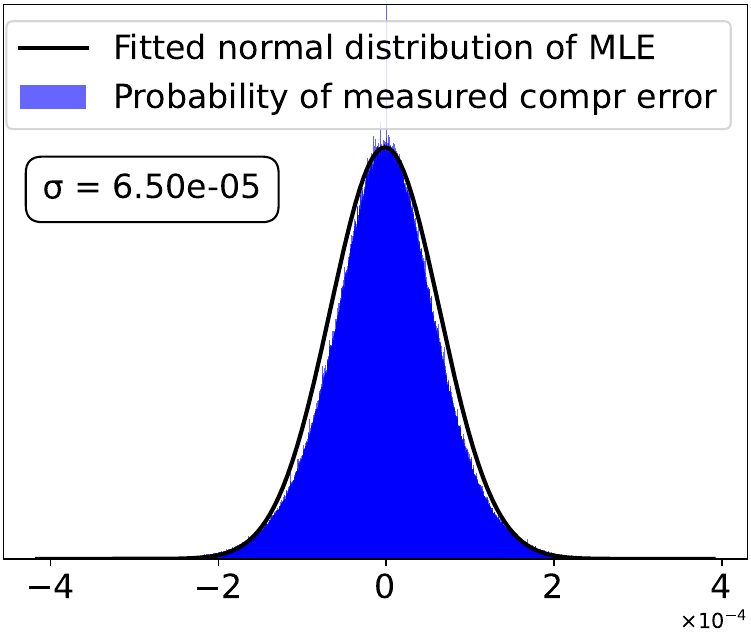}}
        \subfloat[ZFP(Weather)]
        {\includegraphics[width=0.35\linewidth]{./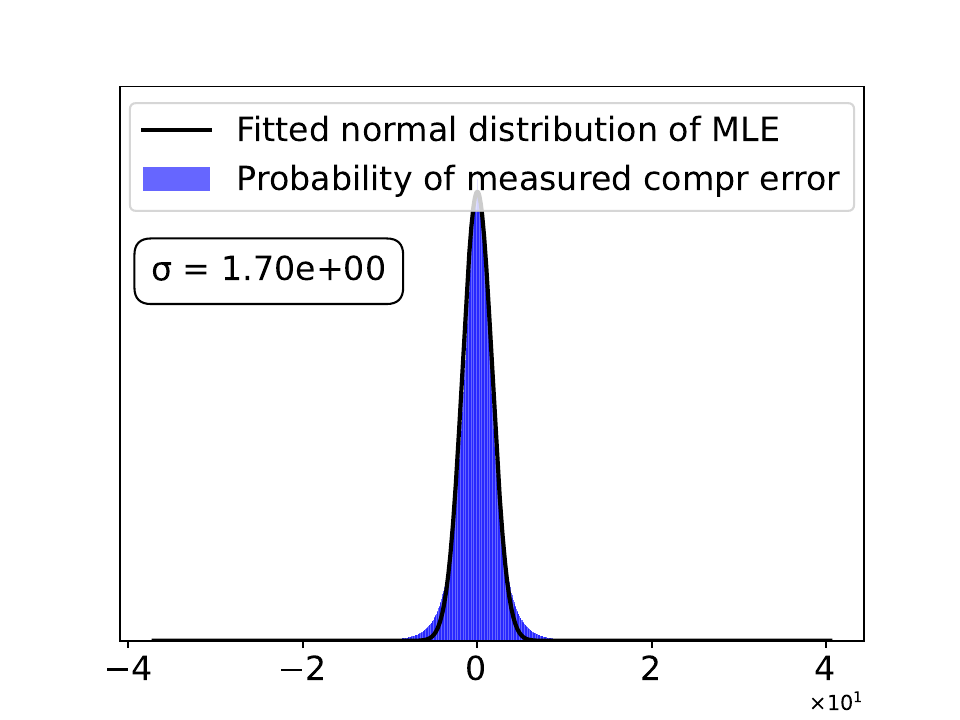}}
        \subfloat[ZFP(Seismic Wave)]
        {\includegraphics[width=0.35\linewidth]{./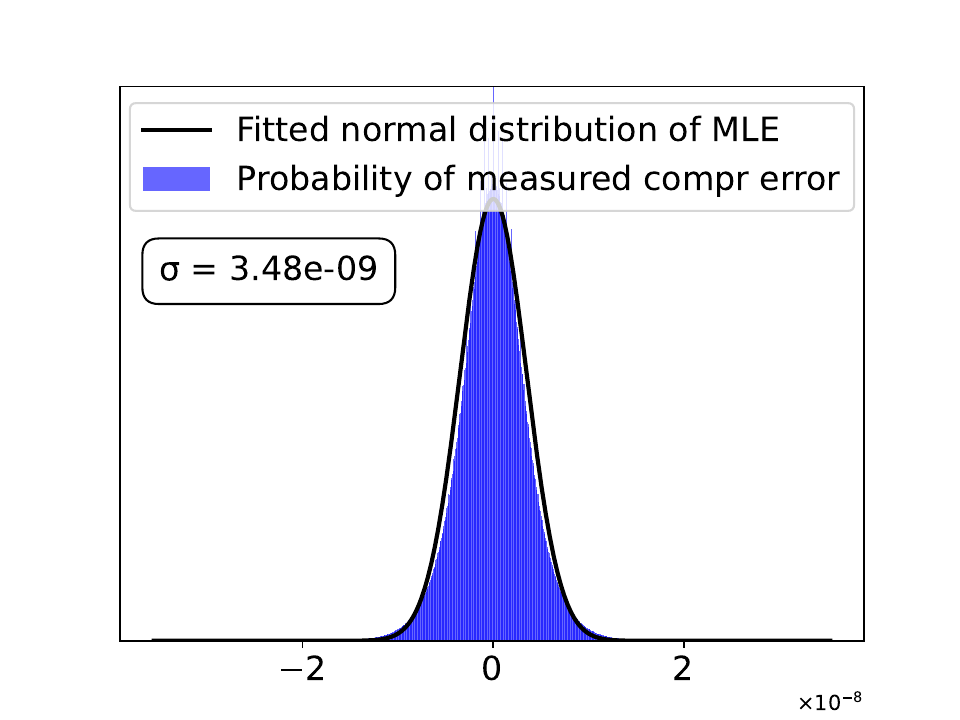}}
        \caption{Exemplifying the normal distribution property of compression errors.}
        \label{fig:normal-distri}
\end{figure}

For the collective communication primitive in MPI, the data will be aggregated gradually from each node during the communication process, which is shown in Fig. \ref{fig-allgather-design}, Fig. \ref{fig-broadcast-design} and Fig. \ref{fig-reduce-scatter-design}. With lossy compression integrated, the compression error will also be aggregated in the data aggregation stage. In the collective data movement framework of {\pname}, each data chunk is compressed only once. Thus, the final error for each data point is within $\widehat{e}$. Unlike the data movement framework, the compression error is aggregated in the collective computation framework of {\pname}, and the aggregation function often involves the \textit{Sum, Average, Max, Min} operations for the floating point data. We further illustrate the error propagation for these operations. For the collective computation framework, we assume there are $n$ data that are collected from $n$ nodes, and the compression error for each of them is $e_{i}$, which follows the normal distribution $e_{i} \sim N(\mu_{i}, \sigma_{i}^{2})$.

\begin{theorem}
Based on the above analysis, the final aggregated error for \textit{Sum} operation falls into the interval $[-2\sqrt{n}\sigma, 2\sqrt{n}\sigma]$ with the probability of $95.44\%$, where $n$ is the number of computing nodes in MPI and $\sigma$ is the variance of the error bound of the lossy compressor.
\end{theorem}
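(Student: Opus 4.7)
The plan is to view the aggregated error as a sum of independent normal random variables, then invoke the familiar two-sigma rule for the Gaussian distribution. Concretely, let $E = \sum_{i=1}^{n} e_i$ denote the total compression error accumulated across the $n$ nodes during a \textit{Sum} reduction. By the assumption stated just before the theorem, each $e_i \sim N(\mu_i, \sigma_i^2)$, and by the physical setup (each node independently compresses its own local data) these random variables are mutually independent.

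First I would invoke the standard closure property of the normal family under independent summation, which yields $E \sim N\!\bigl(\sum_{i=1}^{n}\mu_i,\ \sum_{i=1}^{n}\sigma_i^{2}\bigr)$. Next, I would argue that the error bounded lossy compressor (\fzlight) is essentially unbiased and its per-point error distribution is identical across nodes: symmetry of the error histograms observed in Figure \ref{fig:normal-distri} justifies taking $\mu_i = 0$, and the fact that all nodes use the same absolute error bound $\widehat{e}$ justifies taking $\sigma_i = \sigma$. Substituting these into the previous display gives $E \sim N(0,\, n\sigma^{2})$, so the standard deviation of the aggregated error is $\sqrt{n}\,\sigma$.

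Finally, I would apply the classical two-sigma rule for a Gaussian: for $Z \sim N(0,1)$, $\Pr(|Z| \le 2) \approx 0.9544$. Setting $Z = E / (\sqrt{n}\sigma)$ gives
\begin{equation*}
\Pr\!\bigl(E \in [-2\sqrt{n}\,\sigma,\ 2\sqrt{n}\,\sigma]\bigr) \;=\; \Pr(|Z| \le 2) \;\approx\; 95.44\%,
\end{equation*}
which is precisely the claim.

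The main obstacle is not the calculation but the justification of the modeling assumptions, namely (i) mutual independence of the per-node errors, (ii) zero-mean (unbiasedness) of each $e_i$, and (iii) homogeneity $\sigma_i \equiv \sigma$. Independence can be defended on the grounds that the input chunks on different nodes come from disjoint portions of the dataset and the compressor is deterministic given its input, so statistically the errors behave independently; zero-mean and homogeneity can be defended empirically from the fitted histograms in Figure \ref{fig:normal-distri} combined with a uniform choice of error bound. I would state these three assumptions explicitly at the start of the proof so the reader sees clearly where each is used, and I would also clarify the slight abuse of terminology in the statement (what is called the ``variance of the error bound'' is used as the standard deviation $\sigma$ in the $2\sigma$ interval).
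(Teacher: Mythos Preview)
Your approach is essentially the paper's: sum of independent normals is normal, specialize to mean zero and common variance $\sigma$, then apply the two-sigma rule. The one place where the paper is more careful---and where your justification is actually inaccurate---is in explaining \emph{why} the aggregated error equals $\sum_i e_i$ and why the $e_i$ may be treated as normal. In the ring-based reduce-scatter that the theorem is about, node $i$ does \emph{not} ``independently compress its own local data''; it compresses the running partial sum, so $e_2$ is the compression error incurred on $(x_1+e_1)+x_2$, $e_3$ on $((x_1+e_1)+x_2+e_2)+x_3$, and so on. The paper addresses this by writing out the aggregation chain
\[
x_{\mathrm{sum}} = \bigl(\cdots((x_1+e_1)+x_2)+e_2\cdots\bigr)+e_n = \sum_{i}(x_i+e_i),
\]
using associativity to isolate $\sum_i e_i$, and then checking empirically (their Figure~\ref{fig:e2-normal-distri}) that an error such as $e_2$, which is applied to an already once-decompressed quantity, still looks normal. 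Replace your ``disjoint portions of the dataset'' rationale with this chain-rearrangement argument and the empirical normality check; everything else in your write-up matches the paper.
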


\begin{proof}
The linear combination (e.g., \textit{Sum} in MPI) of normally distributed random variables also follows a normal distribution that is shown in the formula (\ref{norm_distribution}), where $a_{i}$ denotes various constants for $n$ data.

\begin{equation}
\sum\nolimits_{i=0}^{n} a_{i}e_{i} \sim N (\sum\nolimits_{i=0}^{n} a_{i}\mu_{i}, \sum\nolimits_{i=0}^{n} a_{i}^{2} \sigma_{i}^{2})
\label{norm_distribution}
\end{equation}

For the \textit{Sum} operation in the collective computation framework, the compression error will be involved gradually with the aggregation chain, which is shown in the formula (\ref{sum_aggregation}):

\begin{equation}
\begin{aligned}
x_{sum} & = \big(\big(\big((x_{1} + e_{1})+x_{2}\big)+e_{2}\big)+...+e_{n}\big) \\
& = (x_{1}+e_{1})+(x_{2}+e_{2})+...+(x_{n}+e_{n})
\end{aligned}
\label{sum_aggregation}
\end{equation}

Where $x_{i}$ indicates the data collected from node $i$ and $x_{sum}$ represents the final aggregated data. We further demonstrate that the compression error $e_2$ remains to follow the normal distribution after compression through Figure \ref{fig:e2-normal-distri}, which illustrates the probability of the measured compression error ($e_2$) with the fitted MLE curve. The same property also holds for subsequent compression errors in the aggregation chain such as $e_3$, $e_4$, and so on.

\vspace{-3mm}
\begin{figure}[ht]
    \centering
        \subfloat[SZ3($e_2$)]        {\includegraphics[width=0.35\linewidth]{./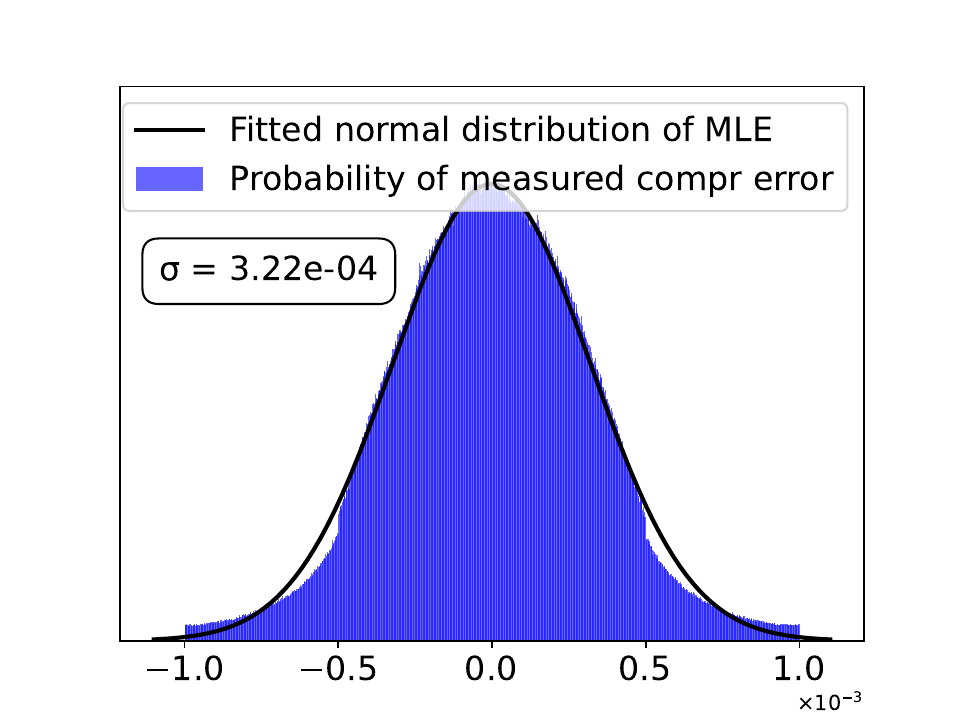}}
        \hspace{3mm}
        \subfloat[ZFP($e_2$)]
        {\includegraphics[width=0.35\linewidth]{./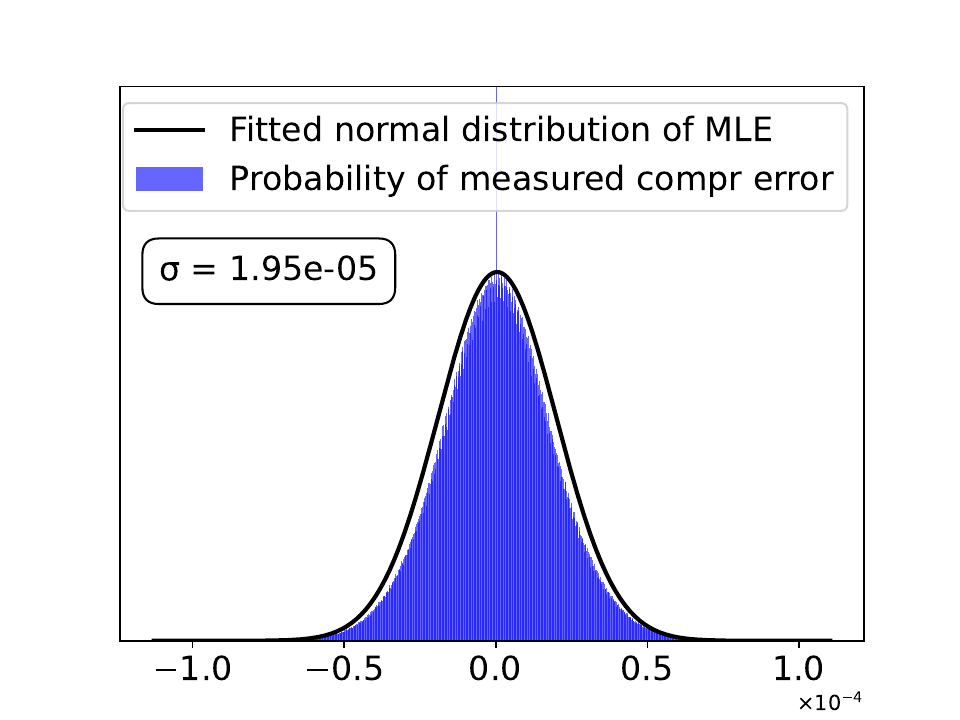}}
        \caption{Exemplifying the normal distribution property of compression error $e_2$.}
        \label{fig:e2-normal-distri}
\end{figure}

From the formula (\ref{sum_aggregation}), we can calculate the aggregated compression error as $\widetilde{e}_{sum} = \sum_{i=0}^{n} e_{i}$. The aggregated error $\widetilde{e}_{sum}$ follows the normal distribution as formula (\ref{error_distribution}):

\begin{equation}
\widetilde{e}_{sum} \sim N(\sum\nolimits_{i=0}^{n} \mu_{i}, \sum\nolimits_{i=0}^{n} \sigma_{i}^{2})
\label{error_distribution}
\end{equation}

The formula (\ref{error_distribution}) indicates that the variance $\sigma^{2}$ of the aggregated error $\widetilde{e}_{sum}$ will be restricted well. When we utilize the same compression error bound across various nodes, the aggregated error conforms to a normal distribution represented as $\widetilde{e}_{sum} \sim N(0, n\sigma^{2})$. Therefore, the final aggregated error falls within the interval $[-2\sqrt{n} \sigma, 2\sqrt{n} \sigma]$ with the probability of $95.44\%$ according to the properties of normal distribution. 

\end{proof}

Since the compression error is bounded by the error bound $\widehat{e}_{i}$ and the error follows the norm distribution, we can assume that $\widehat{e}_{i} \approx 3 \sigma_{i}$ ($\widehat{e}_{i}$ bounded to $3 \sigma_{i}$ with probability of $99.74\%$).

\begin{corollary}
Based on the above assumption, the final aggregated error falls within the interval $[-\frac{2}{3}\sqrt{n} \widehat{e}, \frac{2}{3}\sqrt{n} \widehat{e}]$ with the probability of $95.44\%$. For example, if there are $100$ nodes, the final aggregated error will be bounded within the range $[-\frac{20}{3} \widehat{e}, \frac{20}{3} \widehat{e}]$ with a probability of $95.44\%$.

\end{corollary}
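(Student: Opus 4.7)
The plan is to derive Corollary 1 by direct substitution into the bound of Theorem 1 using the stated three-sigma approximation $\widehat{e}_i \approx 3\sigma_i$, together with the assumption of a common per-node error bound. There is essentially no new probabilistic reasoning to do; the work lies in plugging the modeling assumption into the interval already produced by Theorem 1 and tracking where the two probability figures ($95.44\%$ and $99.74\%$) come from.

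First I would invoke Theorem 1, which states that for the Sum aggregation the total compression error $\widetilde{e}_{sum}$ satisfies $\widetilde{e}_{sum} \in [-2\sqrt{n}\sigma,\, 2\sqrt{n}\sigma]$ with probability $95.44\%$, where $\sigma$ is the common standard deviation of each per-node compression error $e_i$. The $95.44\%$ figure is exactly the $\pm 2\sigma$ coverage of a $N(0, n\sigma^{2})$ random variable and will be carried through unchanged to the corollary. Second, I would use the assumption that immediately precedes the corollary: because $e_i$ is hard-bounded by the compressor within $[-\widehat{e}_i, \widehat{e}_i]$ and is well-modeled as a zero-mean normal random variable, the three-sigma rule gives $\widehat{e}_i \approx 3\sigma_i$ with probability $99.74\%$. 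Assuming a uniform compression error bound $\widehat{e}$ applied at every node (so that all $\sigma_i$ coincide with a common $\sigma$), this reduces to $\sigma \approx \widehat{e}/3$.

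Third, I would substitute $\sigma = \widehat{e}/3$ directly into the interval from Theorem 1 to obtain
\begin{equation*}
\bigl[-2\sqrt{n}\,\tfrac{\widehat{e}}{3},\; 2\sqrt{n}\,\tfrac{\widehat{e}}{3}\bigr] \;=\; \bigl[-\tfrac{2}{3}\sqrt{n}\,\widehat{e},\; \tfrac{2}{3}\sqrt{n}\,\widehat{e}\bigr],
\end{equation*}
with the $95.44\%$ probability inherited verbatim from Theorem 1. Finally, plugging $n = 100$ so that $\sqrt{n} = 10$ yields the illustrative interval $[-\tfrac{20}{3}\widehat{e},\, \tfrac{20}{3}\widehat{e}]$, reproducing the numerical example stated in the corollary.

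The main obstacle here is conceptual rather than computational. No heavy algebra is needed, so the delicate point is justifying the identification $\widehat{e} \approx 3\sigma$, since this equates the compressor's hard error bound with the three-sigma spread of an unbounded normal model whose tails formally extend beyond $\pm \widehat{e}$. I would handle this by pointing out, as the surrounding discussion already does, that the probability mass outside $[-3\sigma, 3\sigma]$ is only $0.26\%$ and can be absorbed into the $95.44\%$ guarantee without materially affecting the stated coverage. I would also briefly remark that the corollary silently assumes a uniform error bound across nodes, so that $\sqrt{\sum_{i} \sigma_i^{2}}$ collapses to $\sqrt{n}\,\sigma$ and, after substitution, to $\tfrac{\sqrt{n}}{3}\widehat{e}$; without this uniformity assumption the clean scalar $\tfrac{2}{3}\sqrt{n}\widehat{e}$ would have to be replaced by a heterogeneous sum and the corollary would lose its closed form.
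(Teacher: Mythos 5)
Your derivation is correct and matches the paper's intended argument exactly: the corollary is obtained by substituting $\sigma \approx \widehat{e}/3$ (from the stated three-sigma assumption with a uniform per-node bound) into Theorem~1's interval $[-2\sqrt{n}\sigma, 2\sqrt{n}\sigma]$, and the $n=100$ example follows from $\sqrt{n}=10$. Your added remarks on the tail mass outside $\pm 3\sigma$ and the implicit uniformity of the error bound are sensible clarifications but do not change the route.
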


\begin{corollary}
Being similar to the \textit{Sum} operation, the final aggregated error for the \textit{Average} operation in collective computation framework follows the normal distribution $\widehat{e}_{avg} \sim N(0, \frac{\sigma^{2}}{n})$. The final aggregated error will be reduced extremely compared to the original error by $n$ times. 
\end{corollary}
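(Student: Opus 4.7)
The plan is to reduce Corollary 2 to Theorem 1 by viewing the Average operation as a deterministic rescaling of the Sum operation. First I would write down the aggregation chain for Average analogous to equation~(\ref{sum_aggregation}), namely $x_{avg} = \frac{1}{n}\sum_{i=1}^{n}(x_i + e_i)$, so that the total propagated compression error takes the form $\widehat{e}_{avg} = \frac{1}{n}\sum_{i=1}^{n} e_i = \frac{1}{n}\widetilde{e}_{sum}$. This casts $\widehat{e}_{avg}$ as a scalar multiple (with fixed factor $1/n$) of the Sum-case error already characterized in Theorem 1.

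Next I would invoke the standard closure of the normal family under affine transformations: if $X \sim N(\mu, \tau^2)$ then $aX \sim N(a\mu,\, a^2\tau^2)$. Combining this with the uniform error-bound hypothesis used in Theorem 1 (so that $\mu_i = 0$ and $\sigma_i = \sigma$ for every node, giving $\widetilde{e}_{sum} \sim N(0, n\sigma^2)$), and applying the transformation with $a = 1/n$, yields
\begin{equation*}
\widehat{e}_{avg} \;=\; \tfrac{1}{n}\,\widetilde{e}_{sum} \;\sim\; N\!\left(0,\; \tfrac{1}{n^{2}}\cdot n\sigma^{2}\right) \;=\; N\!\left(0,\; \tfrac{\sigma^{2}}{n}\right),
\end{equation*}
which is exactly the distribution stated in the corollary. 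To justify the final sentence of the claim, I would then compare $\mathrm{Var}(\widehat{e}_{avg}) = \sigma^{2}/n$ against the variance $\sigma^{2}$ of a single per-node compression error $e_i$, observing that averaging contracts the error variance by a factor of $n$; in the $3\sigma$ translation used just before the corollary, this yields an effective bound on the order of $\widehat{e}/\sqrt{n}$, making the collective far more accurate than a single compression step.

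The algebra is routine, so the main obstacle is the modeling justification rather than the calculation. Specifically, I would need to argue that the per-node errors $e_i$ may be treated as mutually independent along the ring, exactly as was implicitly required in Theorem 1, and that the normalization by $n$ is performed as a final local floating-point division that does not introduce an additional compression-bounded error term. Both points can be handled by inheriting the independence assumption from the Sum proof and by observing that the $1/n$ scaling occurs on each process after all compression/decompression rounds have completed, so no extra error is injected into the aggregation chain.
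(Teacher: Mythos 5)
Your proposal is correct and matches the argument the paper intends: the corollary is left as an immediate consequence of Theorem~1, i.e., applying the linear-combination formula~(\ref{norm_distribution}) with $a_i = 1/n$ (equivalently, scaling $\widetilde{e}_{sum} \sim N(0, n\sigma^2)$ by $1/n$) to obtain $N(0, \sigma^2/n)$, which is exactly what you do. Your added remarks on independence of the $e_i$ and on the error-free final division by $n$ only make explicit assumptions the paper leaves implicit.
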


\begin{theorem}
For the \textit{Max, Min} operations, the final error follows the normal distribution $ \widetilde{e}_{max, min} \sim N (0, (2-\frac{n+2}{2^{n}})\sigma^{2})$.
\end{theorem}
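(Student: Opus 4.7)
The plan is to mirror the \emph{Sum} proof --- express $\widetilde{e}_{\max,\min}$ as a (data-dependent) linear combination of the per-node Gaussian errors $e_1,\ldots,e_n$ and then apply equation~(\ref{norm_distribution}) --- but the nonlinearity of Max/Min prevents a direct additive decomposition. The key workaround is to replace each comparison with a Bernoulli-indicator selection
\begin{equation}
\max(a+e_a,\, b+e_b) = \max(a,b) + e_a\,\mathbf{1}[a>b] + e_b\,\mathbf{1}[a\leq b],
\end{equation}
which holds exactly whenever the compression errors are small relative to the data gaps $|a-b|$, i.e.\ the standard regime for error-bounded lossy compression.

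First I would unfold the aggregation chain $r_{k+1} = \max(r_k,\, x_{k+1}) + e_{k+1}$ in direct analogy with equation~(\ref{sum_aggregation}), applying the Bernoulli decomposition at every level so that $\widetilde{e}_{\max,\min} = \sum_{i=1}^{n} w_i e_i$, where each weight $w_i \in \{0,1\}$ is a product of indicator events recording whether $e_i$ ``survives'' the subsequent comparisons and is thus carried all the way to the root. Under the symmetry hypothesis that the $x_i$ are exchangeable and independent of the compression noise, the $w_i$ are independent of the $e_i$; conditioning on the $w_i$ yields a Gaussian error via equation~(\ref{norm_distribution}), and marginalizing gives a total variance of $\sum_i E[w_i^2]\,\sigma^2 = \sum_i P(w_i=1)\,\sigma^2$. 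A combinatorial count of these surviving paths produces a recursion of the form $V_{k+1} - V_k = (k+1)\sigma^2 / 2^{k+1}$, which telescopes (with $V_0 = 0$) to
\begin{equation}
V_n = \sigma^2 \sum_{k=1}^{n} \frac{k}{2^{k}},
\end{equation}
and simplifying the standard partial sum $\sum_{k=1}^{n} k x^k$ at $x=1/2$ collapses this to exactly $(2 - (n+2)/2^{n})\sigma^2$. The Gaussian form of $\widetilde{e}_{\max,\min}$ is then interpreted (paralleling the paper's treatment of the Sum case) as a normal model with the derived second moment, justified because the unconditional distribution is a mixture of zero-mean Gaussians whose variances concentrate around $V_n$ under the symmetry hypothesis.

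The hard part will be rigorously establishing the recursion coefficient $(k+1)/2^{k+1}$. This requires a careful combinatorial accounting of how the Bernoulli indicators compose along the chain --- how often each injected error is ``overwritten'' by a new maximum versus ``carried forward'' --- and the argument relies on both the exchangeability of the $x_i$ (so that the indicator probabilities have the clean dyadic form claimed) and the small-error regime (so that the Bernoulli decomposition of Max is exact rather than approximate). Without both assumptions one recovers only a coarser bound and not the tight closed form $(2 - (n+2)/2^{n})\sigma^2$, so I would state these two hypotheses explicitly at the outset of the proof, paralleling the independence and Gaussianity assumptions already invoked in the Sum and Average analyses.
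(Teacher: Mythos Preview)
Your approach is aligned in spirit with the paper's but vastly more detailed: the paper's entire argument occupies three lines---it asserts a probability $\tfrac{1}{2}$ of selecting the non-compressed datum at each comparison, then immediately writes the variance as $\sum_{k=1}^{n}\frac{k}{2^{k}}\sigma^{2}=\bigl(2-\frac{n+2}{2^{n}}\bigr)\sigma^{2}$ with no Bernoulli decomposition, survival weights, or recursion. Your scaffolding (the indicator selection identity, the explicit exchangeability and small-error hypotheses, the telescoping recursion) supplies structure that the paper simply omits, so in that sense you are filling in what the paper leaves as a one-line heuristic rather than taking a different route.

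One caution worth flagging: under the chain $r_{k+1}=\max(r_{k},x_{k+1})+e_{k+1}$ with exchangeable data, the running maximum of $k$ values beats a fresh draw with probability $k/(k+1)$, not $\tfrac{1}{2}$; and even if one imposes an independent $\tfrac{1}{2}$ per step, the survival probability of $e_{i}$ becomes $(1/2)^{n-i}$, which sums to $(2-2/2^{n})\sigma^{2}$ rather than the target. So the increment $V_{k+1}-V_{k}=(k+1)\sigma^{2}/2^{k+1}$ is not what your stated model actually produces, and the paper supplies no model from which it would fall out either---the ``hard part'' you correctly identify is genuinely unresolved in both treatments, not merely a bookkeeping exercise.
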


\begin{proof}
Since we need to compare the data from the neighbored node gradually, there is a $\frac{1}{2}$ probability that we can choose the non-compressed data. Otherwise, the selected data will contain an error within the error bound $\widehat{e}$. Therefore, the variance of the final aggregated error can be calculated as following formula (\ref{max_min}):
\begin{equation}
\vspace{-3mm}
\frac{1}{2^{n}}n\sigma^{2} + \frac{1}{2^{n-1}}(n-1)\sigma^{2} +...+\frac{1}{2}\sigma^{2} = (2-\frac{n+2}{2^{n}}) \sigma^{2}
\label{max_min}
\end{equation}
\end{proof}

\subsection{Identify Best-qualified High-speed Error-bounded Lossy Compressor}
\label{sec:high-speed-compressor}

In this section, we compare various lossy compressors to identify the most suitable one for MPI collectives. As highlighted in previous analysis, along with controlling data distortion through error bounds, two critical metrics are compression throughput and compression ratio. Prior literature \cite{Di2016SZ,Liang2018SZ,sz3,Zhao2020SZauto,Yu2022SZx} shows that SZx achieves significantly higher compression speed than other compressors, including SZ2 \cite{Liang2018SZ}, SZ3 \cite{sz3}, FPZIP \cite{Lindstrom2006FPZIP}, Auto-SZ \cite{Zhao2020SZauto}, and ZFP \cite{Lindstrom2014ZFP}. Another high-speed compressor, {\fzlight}, is optimized for multi-core CPU architectures, though its performance compared to SZx remains unclear. Therefore, we focus on SZx and {\fzlight} to determine the best compressor for compression-enabled collective communication.

We introduce SZx and {\fzlight} as follows. SZx divides the input data into small blocks (e.g., 128 floating point values) and calculates the mean $\mu$ of the maximum and minimum values for each block. If all data points within a block fall within the interval $(\mu-e,\mu+e)$, where e is a user-defined compression error bound, the block is labeled as a `constant block', and SZx uses the mean $\mu$ to represent the entire block. If some data points fall outside this interval, the block is classified as a `non-constant block', and SZx applies IEEE 754 analysis to compress the data block. These operations primarily involve bitwise operations, addition, or subtraction, making SZx extremely fast. In contrast, {\fzlight} employs a multi-layer block partitioning approach. It first divides the input data into larger thread-blocks, each handled by a single thread, and then further subdivides these thread-blocks into smaller blocks. Next, {\fzlight} performs fused quantization and Lorenzo prediction on each thread-block, converting data values into integers, with the first value stored as an outlier (occupying four bytes). It then obtains the sign-bits and code-length $\log max$, where $max$ is the maximum integer within the small block, for each block. Finally, {\fzlight} applies an ultra-fast bit-shifting encoding scheme to compress the integers. If the code-length is 0, the block is considered to be a `constant block', and only the one-byte code-length is stored in the compressed bytes. Otherwise, the code-length and other fixed-length encoded bits are stored. While {\fzlight}’s operations are also lightweight, the multiplication involved in the quantization stage may lead to higher computational costs compared to SZx.

We compare SZx and {\fzlight} in terms of compression throughput, ratio, and compression quality across four different application datasets. The details of these application datasets are summarized in Table~\ref{tab:application-datasets} in Section~\ref{sec:setup}. All experiments were conducted on a node with two Intel Xeon E5-2695v4 CPU sockets. The performance of compression-enabled collectives is closely tied to the compression throughput and ratio of the chosen compressor, while the data quality of the collective output depends on the compression quality of the selected compressor. For compression throughput, we start with the single-thread performance. As shown in Table \ref{tab:Single-thread-compression}, SZx outperforms {\fzlight} in the RTM application dataset, while {\fzlight} is faster than SZx in most cases of the Hurricane application dataset. For the Nyx and CESM-ATM datasets, SZx and {\fzlight} exhibit comparable compression and decompression throughputs. Overall, SZx and {\fzlight} demonstrate similar compression performance in single-thread mode. Since both SZx and {\fzlight} are optimized for multi-core CPU architectures, we also evaluate their multi-thread compression throughput, as shown in Table \ref{tab:Multi-thread-compression}. In contrast to the single-thread scenario, {\fzlight} consistently outperforms SZx across all relative error bounds and application datasets in multi-thread mode. The superior performance of {\fzlight} in this mode is attributed to its lightweight computational costs and efficient memory access pattern, enabled by the multi-layer block partitioning approach, along with fused quantization and Lorenzo prediction, and the ultra-fast bit-shifting encoding scheme. In summary, {\fzlight} demonstrates significantly better compression performance than SZx in multi-thread mode.

Apart from compression performance, another critical factor influencing collective performance is the compression ratio. In table \ref{tab:compression_ratio_cb_percent}, we present the compression ratios and percentages of constant blocks for SZx and {\fzlight} when compressing four distinct application datasets. We observe that {\fzlight} consistently outperforms SZx in compression ratio across different application datasets. This advantage is due to {\fzlight}'s quantization and Lorenzo prediction stages, which significantly reduce the entropy of the original data, making it easier to compress. In contrast, SZx operates directly on the higher entropy original data. We also notice that within the same application dataset, a smaller relative error bound leads to a lower percentage of constant blocks for both {\fzlight} and SZx. This reduction in the percentage of constant blocks results in a lower compression ratio, as non-constant blocks require more space in the compressed format compared to constant ones. Based on this analysis, we conclude that {\fzlight} achieves a higher compression ratio than SZx.

Apart from performance factors, compression quality is crucial for achieving accurate collective output. Normalized Root Mean Square Error (NRMSE) is a widely used metric for evaluating the accuracy of reconstructed data. In Table \ref{tab:compression_nrmse}, we assess the NRMSE and its standard deviation for SZx and {\fzlight} across four different application datasets under four relative error bounds. The results show that SZx consistently achieves slightly lower NRMSE in all cases. This can be attributed to SZx's approach of using the median value to represent entire constant data blocks, resulting in extremely low variance. However, although SZx demonstrates better numeric compression accuracy than {\fzlight}, this does not necessarily imply superior actual compression quality, warranting further investigation. Peak Signal-to-Noise Ratio (PSNR) is another important metric for evaluating compression accuracy. The rate-distortion graphs in Figure \ref{fig:PSNR-rate} compare the compression bit rate ($32/compression\_ratio$) and PSNR of SZx and {\fzlight}. The results indicate that, for the same bit rate, {\fzlight} achieves higher PSNR than SZx across the RTM, NYX, and Hurricane application datasets. In the CESM-ATM application dataset, {\fzlight} slightly underperforms SZx at very low bit rates but outperforms SZx when the bit rate exceeds 1. This rate-distortion analysis demonstrates that {\fzlight} offers better numeric compression accuracy than SZx at equivalent compression ratios. To further understand the difference in compression quality, we visualize the reconstructed data from both compressors in Figure \ref{fig:vis-szx-vs-fzl}. When compressing the CLOUD field of the CESM dataset to a compression ratio of 8.3, the reconstructed data from SZx displays horizontal stripe artifacts compared to the original data, while {\fzlight} maintains visual quality consistent with the original one. These artifacts in SZx arise because it flattens entire constant data blocks into a single median value, losing the variance within the block. In contrast, {\fzlight} utilizes Lorenzo prediction to preserve data variance, leading to superior compression quality. In conclusion, {\fzlight} demonstrates better overall compression quality than SZx.

Through comprehensive analysis, we conclude that {\fzlight} is generally the better compressor for compression-enabled collective communication. Thus, we decide to customize {\fzlight} for accelerating collective communication. For comparison, we also implemented compression-enabled point-to-point communication-based collectives using the fixed-rate mode and fixed-accuracy mode of ZFP, which serve as baselines in our evaluation.

\begin{table}[ht]
\caption{Single-thread compression throughput (GB/s). The higher throughput is underlined.}
\label{tab:Single-thread-compression}
\resizebox{\columnwidth}{!}{%
\begin{tabular}{@{}cc|cc|cc|cc|cc@{}}
\toprule
\multicolumn{2}{c|}{\textbf{Throughput (GB/s)}} &
  \multicolumn{2}{c|}{\textbf{RTM}} &
  \multicolumn{2}{c|}{\textbf{Nyx}} &
  \multicolumn{2}{c|}{\textbf{CESM-ATM}} &
  \multicolumn{2}{c}{\textbf{Hurricane}} \\ \midrule
                                   & \textbf{REL} & \textbf{COM} & \textbf{DEC} & \textbf{COM} & \textbf{DEC} & \textbf{COM} & \textbf{DEC} & \textbf{COM} & \textbf{DEC} \\ \midrule
\multirow{4}{*}{\textbf{{\fzlight}}} & 1E-1         & 2.97         & 6.25         & 2.87         & 5.89         & 2.46         & 7.73         & {\ul 2.51}   & 5.21         \\
                                   & 1E-2         & 2.80         & 5.80         & {\ul 1.97}   & 3.93         & 1.22         & {\ul 2.75}   & {\ul 1.60}   & {\ul 3.15}   \\
                                   & 1E-3         & 2.68         & 5.47         & {\ul 1.33}   & 2.68         & 0.75         & {\ul 1.58}   & {\ul 1.23}   & {\ul 2.31}   \\
                                   & 1E-4         & 2.61         & 5.39         & {\ul 0.99}   & {\ul 1.83}   & 0.71         & {\ul 1.30}   & {\ul 1.10}   & 1.93         \\ \midrule
\multirow{4}{*}{\textbf{SZx}}      & 1E-1         & {\ul 3.78}   & {\ul 6.98}   & {\ul 3.60}   & {\ul 7.52}   & {\ul 4.77}   & {\ul 11.23}  & 2.46         & {\ul 6.02}   \\
                                   & 1E-2         & {\ul 3.67}   & {\ul 6.61}   & 1.78         & {\ul 4.34}   & {\ul 1.57}   & 2.25         & 1.36         & 2.82         \\
                                   & 1E-3         & {\ul 3.55}   & {\ul 6.26}   & 1.13         & {\ul 2.76}   & {\ul 1.03}   & 1.37         & 1.15         & 2.28         \\
                                   & 1E-4         & {\ul 3.51}   & {\ul 6.22}   & 0.83         & 1.82         & {\ul 0.93}   & 1.23         & 1.05         & {\ul 1.97}   \\ \bottomrule
\end{tabular}%
}
\end{table}

\begin{table}[ht]
\caption{Multi-thread compression throughput (GB/s). The higher throughput is underlined.}
\label{tab:Multi-thread-compression}
\resizebox{\columnwidth}{!}{%
\begin{tabular}{@{}cc|cc|cc|cc|cc@{}}
\toprule
\multicolumn{2}{c|}{\textbf{Throughput (GB/s)}} &
  \multicolumn{2}{c|}{\textbf{RTM}} &
  \multicolumn{2}{c|}{\textbf{Nyx}} &
  \multicolumn{2}{c|}{\textbf{CESM-ATM}} &
  \multicolumn{2}{c}{\textbf{Hurricane}} \\ \midrule
 &
  \textbf{REL} &
  \textbf{COM} &
  \textbf{DEC} &
  \textbf{COM} &
  \textbf{DEC} &
  \textbf{COM} &
  \textbf{DEC} &
  \textbf{COM} &
  \textbf{DEC} \\ \midrule
\multirow{4}{*}{\textbf{{\fzlight}}} &
  1E-1 &
  {\ul 54.10} &
  {\ul 53.46} &
  {\ul 52.13} &
  {\ul 52.39} &
  {\ul 39.50} &
  {\ul 103.65} &
  {\ul 51.38} &
  {\ul 79.34} \\
                              & 1E-2 & {\ul 50.19} & {\ul 52.58} & {\ul 38.42} & {\ul 46.42} & {\ul 19.97} & {\ul 41.91} & {\ul 26.52} & {\ul 47.86} \\
                              & 1E-3 & {\ul 47.36} & {\ul 50.95} & {\ul 28.45} & {\ul 38.33} & {\ul 14.26} & {\ul 28.98} & {\ul 20.15} & {\ul 34.18} \\
                              & 1E-4 & {\ul 44.09} & {\ul 48.26} & {\ul 22.13} & {\ul 31.85} & {\ul 14.61} & {\ul 26.08} & {\ul 18.02} & {\ul 27.79} \\ \midrule
\multirow{4}{*}{\textbf{SZx}} & 1E-1 & 31.90       & 45.97       & 34.20       & 46.04       & 20.38       & 69.87       & 22.61       & 62.04 \\
                              & 1E-2 & 28.77       & 45.09       & 16.82       & 34.88       & 4.97        & 23.88       & 8.04        & 33.25       \\
                              & 1E-3 & 27.06       & 43.19       & 9.16        & 30.75       & 2.97        & 24.16       & 5.51        & 30.13       \\
                              & 1E-4 & 26.99       & 43.52       & 5.93        & 24.66 & 2.78        & 22.76       & 4.97        & 26.54       \\ \bottomrule
\end{tabular}%
}
\end{table}

\begin{table}[]
\caption{Compression ratio and percentage of constant blocks. The higher ratio is underlined.}
\label{tab:compression_ratio_cb_percent}
\resizebox{\columnwidth}{!}{%
\begin{tabular}{@{}cc|cc|cc|cc|cc@{}}
\toprule
\multicolumn{2}{c|}{\textbf{C.B. = Constant Block}} &
  \multicolumn{2}{c|}{\textbf{RTM}} &
  \multicolumn{2}{c|}{\textbf{Nyx}} &
  \multicolumn{2}{c|}{\textbf{CESM-ATM}} &
  \multicolumn{2}{c}{\textbf{Hurricane}} \\ \midrule
 &
  \textbf{REL} &
  \textbf{Ratio} &
  \textbf{C.B.\%} &
  \textbf{Ratio} &
  \textbf{C.B.\%} &
  \textbf{Ratio} &
  \textbf{C.B.\%} &
  \textbf{Ratio} &
  \textbf{C.B.\%} \\ \midrule
\multirow{4}{*}{\textbf{{\fzlight}}} &
  1E-1 &
  {\ul 129.64} &
  98.91\% &
  {\ul 107.83} &
  96.65\% &
  {\ul 69.45} &
  89.30\% &
  {\ul 73.74} &
  90.99\% \\
                              & 1E-2 & {\ul 107.06} & 97.54\% & {\ul 27.00} & 57.44\% & {\ul 21.76} & 45.30\% & {\ul 25.76} & 62.45\% \\
                              & 1E-3 & {\ul 81.04}  & 96.09\% & {\ul 14.97} & 34.64\% & {\ul 12.61} & 19.55\% & {\ul 13.65} & 48.64\% \\
                              & 1E-4 & {\ul 61.51}  & 95.49\% & {\ul 7.81}  & 21.37\% & {\ul 7.18}  & 12.84\% & {\ul 8.12}  & 44.68\% \\ \midrule
\multirow{4}{*}{\textbf{SZx}} & 1E-1 & 82.72        & 97.68\% & 107.58      & 99.02\% & 68.63       & 95.46\% & 59.89       & 93.50\% \\
                              & 1E-2 & 60.14        & 96.03\% & 11.12       & 55.58\% & 9.29        & 48.65\% & 10.61       & 59.72\% \\
                              & 1E-3 & 45.98        & 94.88\% & 5.76        & 37.49\% & 4.39        & 19.57\% & 6.10        & 43.58\% \\
                              & 1E-4 & 37.60        & 94.52\% & 3.62        & 18.55\% & 3.10        & 12.07\% & 4.59        & 38.82\% \\ \bottomrule
\end{tabular}%
}
\end{table}

\begin{table}[]
\caption{NRMSE and its standard deviation. The lower NRMSE is underlined.}
\label{tab:compression_nrmse}
\resizebox{\columnwidth}{!}{%
\begin{tabular}{@{}cc|cc|cc|cc|cc@{}}
\toprule
\multicolumn{2}{c|}{}      & \multicolumn{2}{c|}{\textbf{RTM}} & \multicolumn{2}{c|}{\textbf{Nyx}} & \multicolumn{2}{c|}{\textbf{CESM-ATM}} & \multicolumn{2}{c}{\textbf{Hurricane}} \\ \midrule
                                   & \textbf{REL} & \textbf{NRMSE}   & \textbf{STD}   & \textbf{NRMSE}   & \textbf{STD}   & \textbf{NRMSE}      & \textbf{STD}     & \textbf{NRMSE}      & \textbf{STD}     \\ \midrule
\multirow{4}{*}{\textbf{{\fzlight}}} & 1E-1         & 4.62E-03         & 3E-03          & 2.17E-02         & 2E-02          & 3.74E-02            & 2E-02            & 2.29E-02            & 2E-02            \\
                                   & 1E-2         & 6.41E-04         & 5E-04          & 3.20E-03         & 3E-03          & 4.88E-03            & 1E-03            & 3.07E-03            & 2E-03            \\
                                   & 1E-3         & 8.12E-05         & 7E-05          & 4.03E-04         & 3E-04          & 5.30E-04            & 9E-05            & 3.43E-04            & 2E-04            \\
                                   & 1E-4         & 8.85E-06         & 7E-06          & 4.69E-05         & 2E-05          & 5.36E-05            & 8E-06            & 3.57E-05            & 2E-05            \\ \midrule
\multirow{4}{*}{\textbf{SZx}}      & 1E-1         & {\ul 3.22E-03}   & 2E-03          & {\ul 1.19E-02}   & 1E-02          & {\ul 1.76E-02}      & 5E-03            & {\ul 1.71E-02}      & 6E-03            \\
                                   & 1E-2         & {\ul 3.34E-04}   & 2E-04          & {\ul 1.40E-03}   & 8E-04          & {\ul 2.11E-03}      & 4E-04            & {\ul 1.75E-03}      & 5E-04            \\
                                   & 1E-3         & {\ul 3.29E-05}   & 2E-05          & {\ul 1.89E-04}   & 1E-04          & {\ul 1.62E-04}      & 4E-05            & {\ul 1.37E-04}      & 3E-05            \\
                                   & 1E-4         & {\ul 3.04E-06}   & 2E-06          & {\ul 1.77E-05}   & 6E-06          & {\ul 1.38E-05}      & 3E-06            & {\ul 1.37E-05}      & 3E-06            \\ \bottomrule
\end{tabular}%
}
\end{table}


\begin{figure*}[ht]
\centering
\includegraphics[width=0.8\linewidth, trim={0 3mm 0 1mm}, clip]{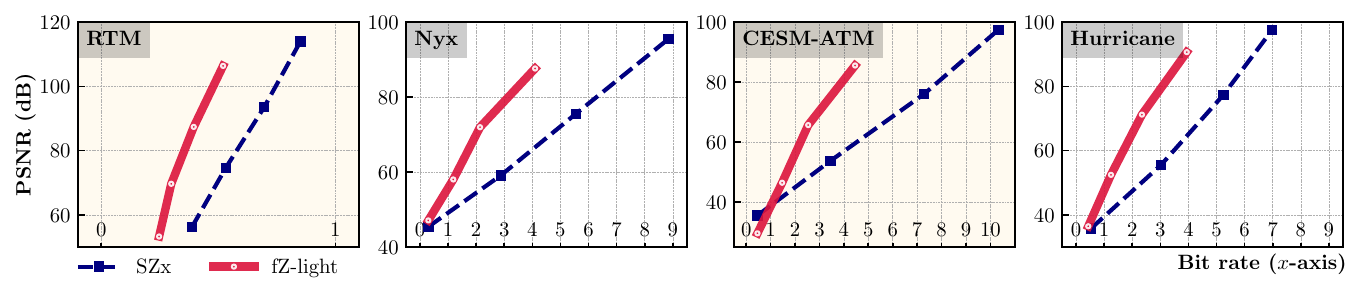}
\caption{Rate-distortion graphs on four application datasets.}
\label{fig:PSNR-rate}
\end{figure*}

\begin{figure}[ht]
    \centering
    \subfloat[Original Data]{\includegraphics[width=0.8\linewidth]{./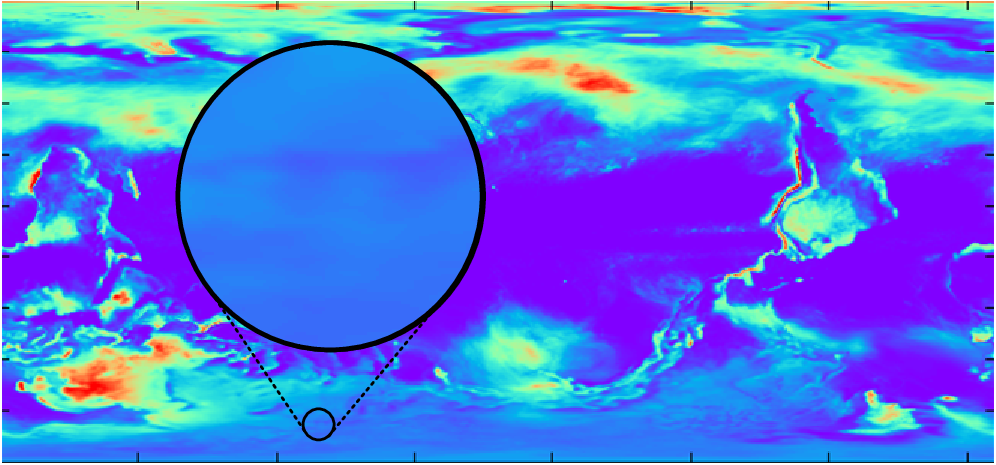}}
    
    \subfloat[SZx]{\includegraphics[width=0.8\linewidth]{./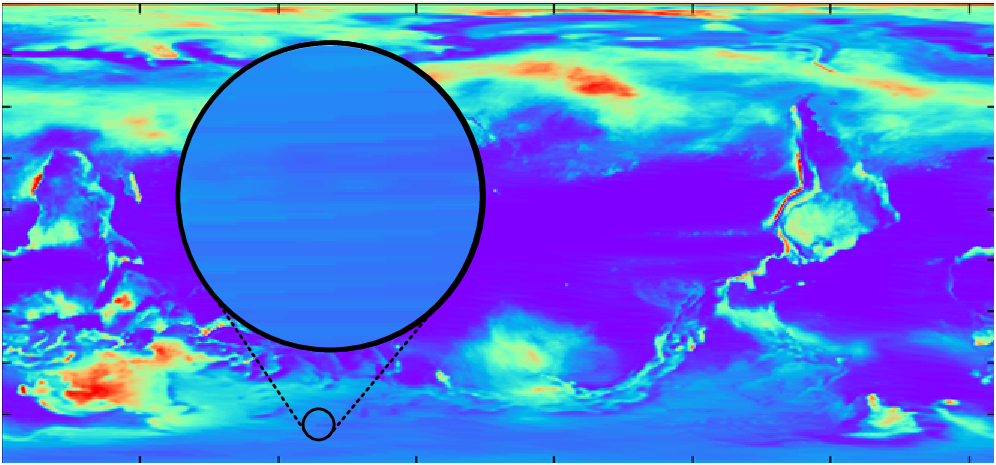}}
    
    \subfloat[{\fzlight}]{\includegraphics[width=0.8\linewidth]{./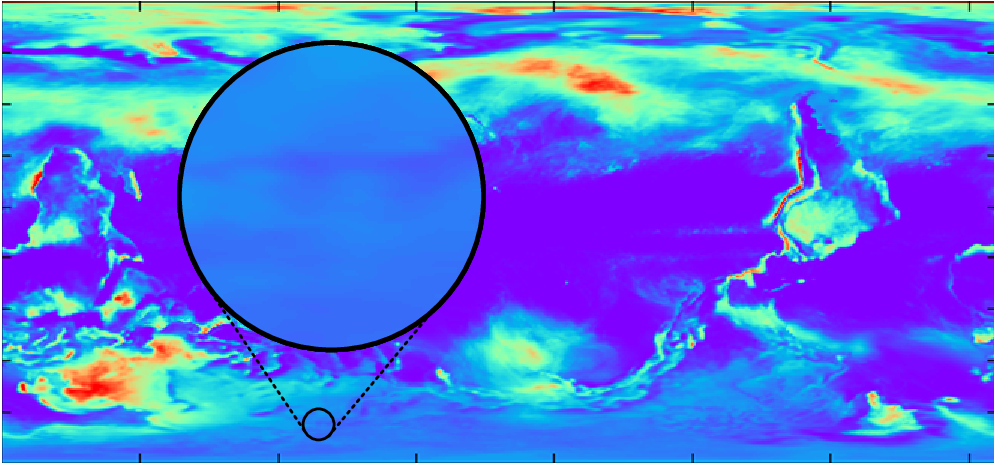}}    

    \caption{Visualization of reconstructed data for SZx vs. {\fzlight} with the same compression ratio of 8.3.}
    \label{fig:vis-szx-vs-fzl}
\end{figure}

\subsection{Characterization of Performance Bottlenecks}
\label{sec:Characterization}


We integrate various high-speed compressors into the point-to-point communication of the ring-based Allreduce algorithm to identify key bottlenecks in collective performance, which will guide our optimization strategies. The ring-based Allreduce is an ideal example as it involves both collective data movement (allgather) and collective computation (reduce\_scatter). Figure \ref{fig-original-naive} in Section \ref{sec:evaluate_different_compression-integrated_baselines} shows the detailed performance breakdown for the direct integration of high-speed compressors using the CPRP2P method. Execution times are normalized to the total runtime of the original MPI\_Allreduce without compression. With {\fzlight} integrated, the CPRP2P method achieves similar overall performance to the original MPI. However, the main bottleneck becomes compression, accounting for 66.42\% of the normalized execution time, followed by communication, which takes 24.58\%. Therefore, optimizing both compression and communication is essential for improving the {\fzlight} integrated baseline.

\subsection{Step-wise Optimizations}

In this subsection, we detail our stepwise optimization strategies, a principal contribution of this paper. To facilitate explanation, we present our implementation of a ring-based Allreduce algorithm accelerated by the {\pname} framework. These optimization strategies are applicable to other collective operations as well. We use {\fzlight} as an exemplar to explain our optimization strategies, which are also applicable to other compressors. Notably, in the ring-based Allreduce, the data transfer required for each process is just $\frac{2(N-1)}{N}$$\cdot$$D_{input}$, where $D_{input}$ represents the input data size and $N$ is the number of processes. Thus, this design is highly efficient for processing long messages. Furthermore, our integration of lossy compression significantly enhances the efficiency of collective communication involving long messages. We have termed our {\pname}-accelerated MPI\_Allreduce as Z-Allreduce, with `Z' denoting compression. Subsequently, we will discuss the design and implementation specifics of Z-Allreduce in a structured, step-by-step approach.

\subsubsection{Utilize our collective data movement framework}
\label{sec:Utilize-data-move}
To reduce the compression overhead and balance communication, we utilize the data movement framework that we presented in Section \ref{sec-data-movement-framework}. At the beginning, every process compresses its local data and stores the compressed data size. Then, every process synchronizes with each other to collect the compressed data sizes in a local integer array $compressed\_sizes$. As the compressed data size only has four bytes, this step is very fast. After that, all processes get the sum of all the compressed data sizes, noted as $total\_count$. Then, each process communicates with each other with a fixed pipeline size until every process has sent $to\_send = total\_count - compressed\_sizes[send\_rank]$ and received $to\_recv = total\_count - compressed\_sizes[self\_rank]$ from other processes in a ring communication pattern. After all communication ends, every process starts to decompress all the received compressed data and store the decompressed data in the receive buffer. Note that they do not need to decompress the data that are compressed by themselves. After this step, we can significantly decrease the time spent by compression and Allgather communication compared with the direct integration of {\fzlight}. Besides, our solution can also preserve the quality/accuracy of the data very well because of the error-bounding feature, which will be demonstrated later in Section \ref{sec-image-stacking}.

\subsubsection{Customize {\fzlight} to reduce communication overhead with our collective computation framework}
\label{sec:Customize-fzl-reduce}
In order to use our collective computation framework in the reduce-scatter stage of the ring-based Allreduce algorithm, we need to redesign the compression workflow of {\fzlight} so that we can consistently poll the progress of the Isend and Irecv inside of the compression and decompression. Therefore, we design and implement the PIPE-{\fzlight} (pipelined {\fzlight}) based on the original {\fzlight}. Instead of compressing the original data as a whole, we divide the compression process into small chunks, each of which handles 5120 data points. Between the compression of two adjacent chunks, we actively poll the communication progress of the non-blocking receive. However, the compressed data of each chunk cannot be simply combined together, otherwise the compressed data cannot be correctly decompressed because each compressed chunk is of variable uncertain length. To solve this problem, we decide to store the compressed data of all chunks in the same output buffer and pre-allocate enough memory space (four bytes per chunk, small memory consumption) at the front of the buffer for storing the compressed data sizes of those chunks together (essentially a kind of index), instead of storing them along with the compressed data chunks. Such a design is more cache-friendly, thus having lower overhead. During the decompression, we maintain a chunk-starting-location pointer based on the recorded compressed chunk sizes to tell the algorithm where the decompression operation should start for each chunk. We repeat this process chunk by chunk and poll the progress of the non-blocking send between decompression chunks. Through this optimization, we can hide the communication in the reduce-scatter stage inside of compression, which further improves the performance of our Z-Allreduce design.

\section{Experimental Evaluation}\label{exp-setup-sec}

In this section, we present and discuss the evaluation results.


\subsection{Experimental Setup}
\label{sec:setup}

Since inter-node communication is the major bottleneck for collectives as discussed previously, we utilized a 128-node cluster with one process per node in our experiments. Each node is equipped with two Intel Xeon E5-2695v4 Broadwell processors. Furthermore, each NUMA node contains 64 GB of DDR4 memory, resulting in a total of 128 GB of memory per node. The nodes are interconnected via Intel Omni-Path Architecture (OPA), providing a maximum message rate of 97 million per second and a bandwidth of 100 Gbps.

\begin{table}[h]
\centering
\caption{Information of the application datasets.}
\label{tab:application-datasets}
\resizebox{\columnwidth}{!}{%
\begin{tabular}{ccccc}
\hline
\textbf{Application} & \textbf{\# fields} & \textbf{Dims per field} & \textbf{Total Size} & \textbf{Domain} \\ \hline
\textbf{RTM~\cite{Kayum2020RTM}} & 151 & 849x849x235 & 95.3 GB & Seismic Wave \\
\textbf{NYX~\cite{nyx}} & 6 & 512x512x512 & 3.1 GB & Cosmology \\
\textbf{CESM-ATM~\cite{hurricane}} & 79 & 1800x3600 & 2.0 GB & Climate Simu. \\
\textbf{Hurricane~\cite{cesm}} & 13 & 100x500x500 & 1.3 GB & Weather Simu.\\\bottomrule 
\end{tabular}%
}
\end{table}

MPI collectives are common operations used in simulation analysis. For instance, generating stacking images in reverse time migration (essentially an Allreduce operation) is a typical real-world use case~\cite{Gurhem2021Kirchhoff}, which will be demonstrated in Section~\ref{sec-image-stacking}. We use the RTM application dataset for our evaluation, as it is the largest dataset we have, as shown in Table\ref{tab:application-datasets}. The information of our solutions and baselines are presented in Table \ref{tab:collective-implementations}. The compression error bound is set to 1E-4 by default. In our experiments, we adopt a two-stage approach, consisting of a warm-up stage and an execution stage. Each stage is run 10 times, and we report the average results of the execution stage to present the overall performance.

\begin{table}[ht]
\centering
\caption{Collective communication solutions.}
\label{tab:collective-implementations}
\resizebox{\columnwidth}{!}{%
\begin{tabular}{@{}ll@{}}
\toprule
\textbf{Solution} & \textbf{Description} \\ \midrule
\textbf{MPI (baseline)} & Orginal MPI collectives with no compression\\
\textbf{CPRP2P (baseline)} & Collectives implemented by CPRP2P with fZ-light\\
\textbf{C-Coll (baseline)} & The current SOTA compression-accelerated collectives~\cite{huang2023ccoll}\\
\textbf{{\pname} (single-thread)} & Single-thread mode of {\pname} \\
\textbf{{\pname} (multi-thread)} & Multi-thread mode of {\pname} \\ \bottomrule
\end{tabular}%
}
\end{table}

\subsection{Evaluating different compression-integrated baselines}
\label{sec:evaluate_different_compression-integrated_baselines}
We compare different compression-integrated CPRP2P baselines with the original MPI without compression using the Allreduce operation across 64 Broadwell nodes in Figure \ref{fig-original-naive}. The execution time of all the baselines is normalized based the running time of the original MPI\_Allreduce. We can notice that fZ-light has the highest performance in all CPRP2P baselines. This is because that fZ-light has the best compression throughput and compression ratio among all its counterparts. It is worth noting that, both the ZFP(ABS) and ZFP(FXR) demonstrate considerably worse performance than SZx and {\fzlight} when integrated into collective communication because of their relatively low compression throughput and ratio as shown in \cite{huang2023ccoll}. Compared to the original MPI, {\fzlight} integrated baseline shows much less communication time due to the significantly decreased communication volume by the compression technique. Since the {\fzlight} integrated baseline presents the highest performance in all the compression-enabled baselines, we use CPRP2P to represent it and compare our ZCCL with it in later experiments.       

\begin{figure}[ht]
    \centering
    {\includegraphics[width=0.8\linewidth]{./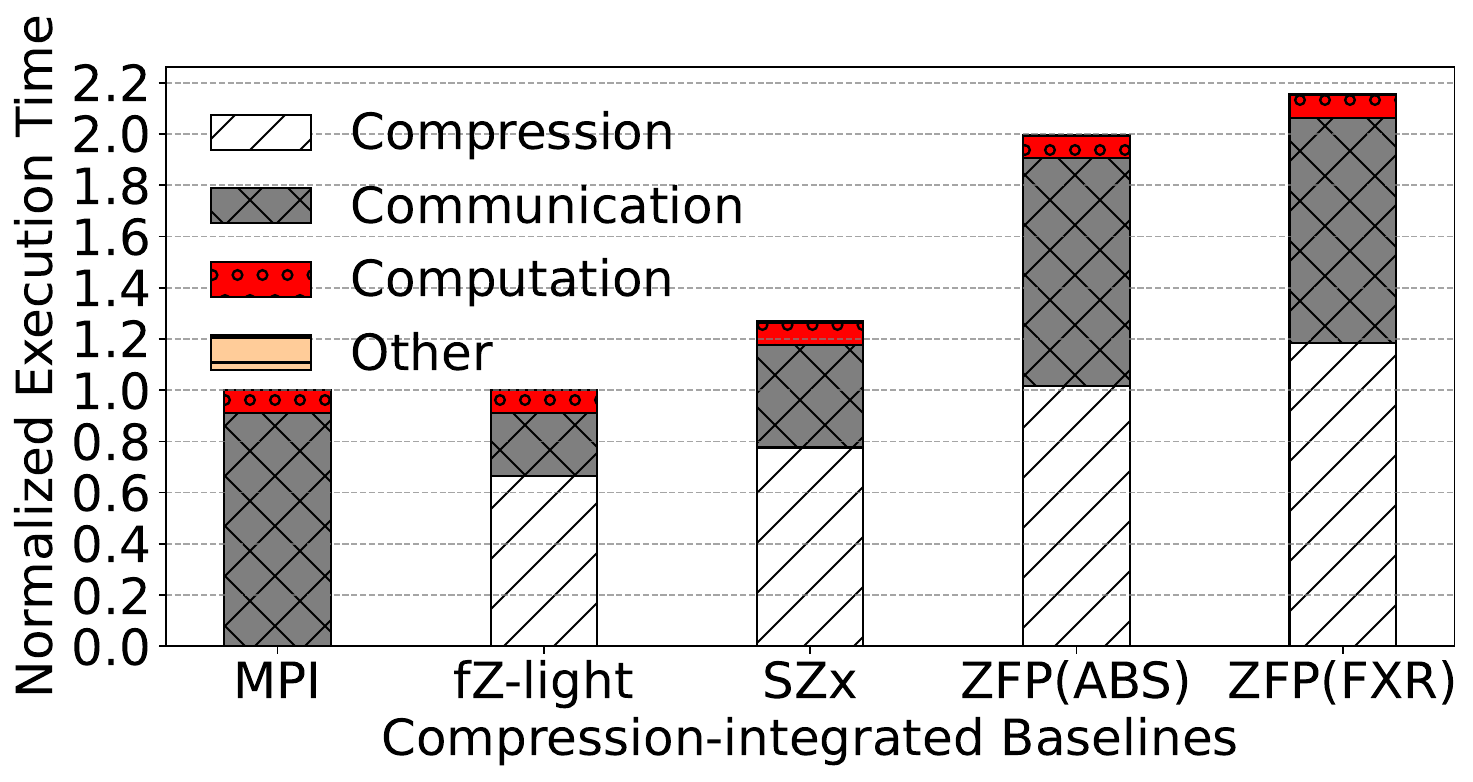}}
    \caption{Compare the normalized execution time of original MPI and the CPRP2P baselines with different compressors.} 
    \label{fig-original-naive}
\end{figure}
\subsection{Step-wise Optimizations to Z-Allreduce with Performance Analysis}
\label{sec:evaluation}
In this section, we carry out step-by-step optimizations to our Z-Allreduce ({\pname}-enhanced Allreduce) integrated with {\fzlight} and demonstrate the performance on 64 Broadwell nodes. These optimizations are also applicable to Z-Allreduce when integrated with other compressors. All compression and decompression operations in this section are conducted in single-thread mode. The ring-based Allreduce that we implement contains a Reduce-scatter stage and an Allgather stage. Thus, we breakdown the performance of the two stages separately.

\subsubsection{Evaluating our collective data movement framework with Allgather}
\label{sec:eval-framework}

Figure \ref{fig-new-design} shows the performance improvement achieved by our novel design in the Allgather stage for data sizes ranging from 50 MB to 600 MB. In {\pname}, our collective data movement framework brings a considerable reduction in both compression and decompression time. Notably, at the data size of 300 MB, our {\pname} achieves a compression speed-up of 3.74$\times$ when compared to CPRP2P approach. Additionally, our balanced communication in {\pname} is up to 1.46$\times$ faster than the unbalanced communication in CPRP2P at 600MB. Overall, our {\pname} achieves the maximum speedup of 3.26$\times$ compared to CPRP2P at 250 MB. We analyze the key reason why our solution can obtain a significant performance improvement as follows. In fact, to overcome the compression bottleneck and balance MPI communication, we utilize our collective data movement framework that pre-compresses the data before transmission and decompresses it after all communication, rather than using expensive CPRP2P in collective routines. This novel design can significantly reduce the amount of compression required during collective communication. Using CPRP2P also brings unbalanced communication as the compressed data sizes may vary, but we can balance the communication with a fixed pipeline size in our new design because we do not need to compress the data every time before we send it. 
\begin{figure}[ht]
    \centering
    {\includegraphics[width=0.99\linewidth]{./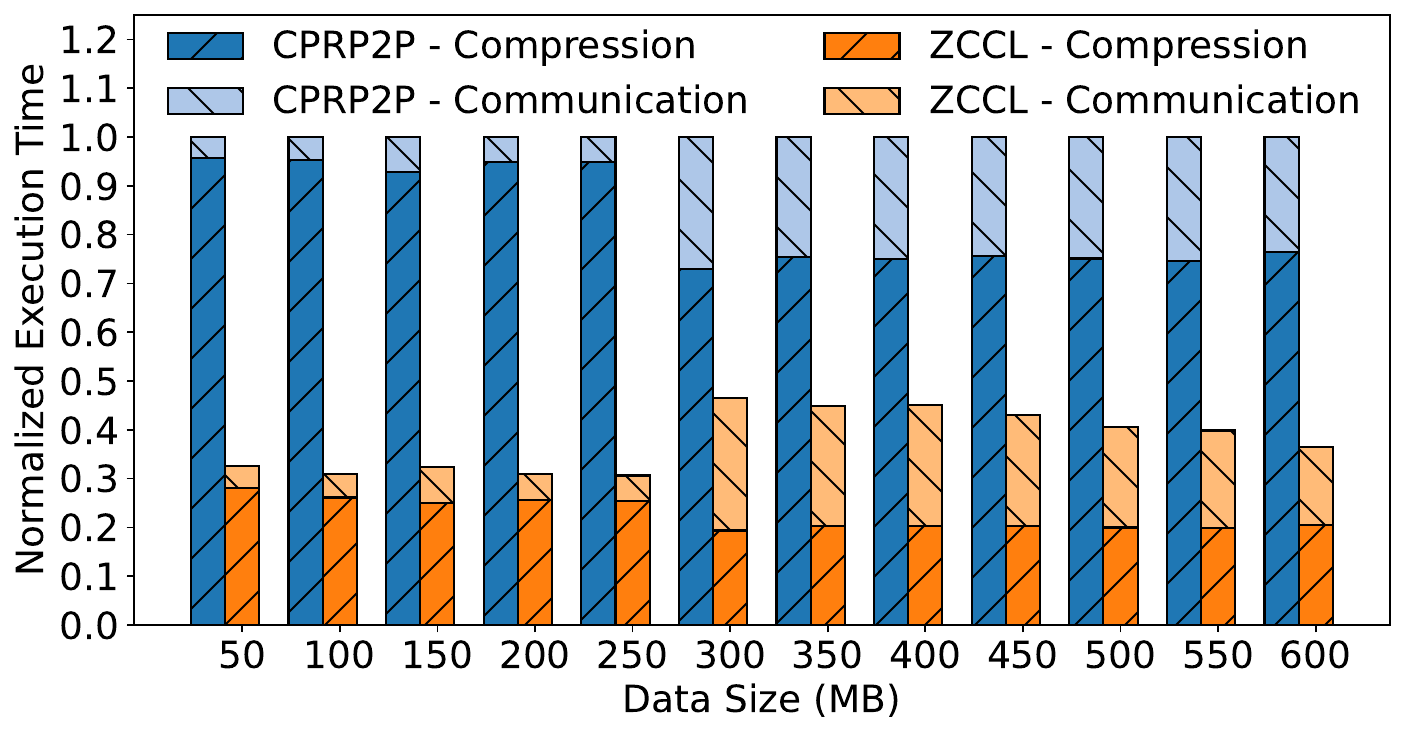}}
    \caption{Compare the Allgather performance of CPRP2P and our {\pname} from 50 MB to 600 MB.} 
    \label{fig-new-design}
\end{figure}

Note that using CPRP2P may accumulate errors during intensive collective communication, such as ring-based communication, as the same data is repeatedly passed from one process to another. Therefore, we have utilized our new framework to ensure that errors in the final results are bounded, which we will discuss in the application evaluation section \ref{sec-image-stacking}.

\subsubsection{Evaluating reduced communication overhead with our collective computation framework}
We demonstrate the effectiveness of our collective computation framework in this section. From Figure \ref{fig-Overlap}, it is evident that our {\pname} leads to significantly less communication in the Reduce\_scatter stage compared with the CPRP2P method, resulting in a performance boost of up to 3.32$\times$ for the data size of 300 MB. The rationale for this performance improvement is shown in the following text. To utilize our collective computation framework for hiding communication during compression, we design and implement PIPE-{\fzlight} (pipelined fZ-light), which could break the compression process into small chunks and allow us to overlap the compression with communication in a fine-grained pipelined manner. As a result, we can significantly reduce communication time in the Reduce\_scatter stage. Combining this optimized Reduce\_scatter with the previously optimized Allgather, we have obtained the final version of our Z-Allreduce and will evaluate it in the following parts.

\begin{figure}[ht]
    \centering
    {\includegraphics[width=0.99\linewidth]{./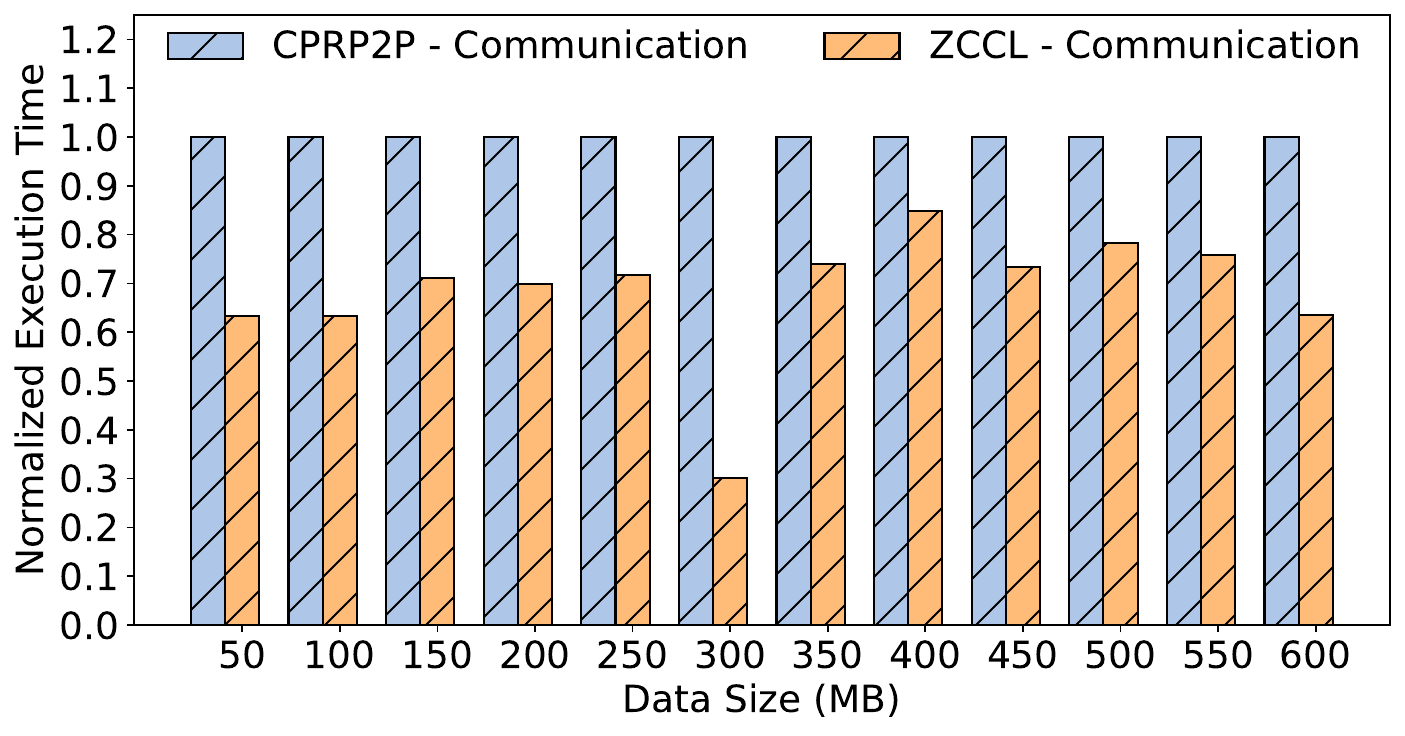}}
    \caption{Compare the Reduce\_scatter communication time of CPRP2P and {\pname} from 50 MB to 600 MB.} 
    \label{fig-Overlap}
\end{figure}

\subsection{End-to-end Comparisons of Z-Allreduce with Baselines}

In this section, we compare the performance of our {\pname}-accelerated Z-Allreduce with four different baselines on various data sizes, node numbers, and datasets.

\begin{figure}[ht]
    \centering
    {\includegraphics[width=0.99\linewidth]{./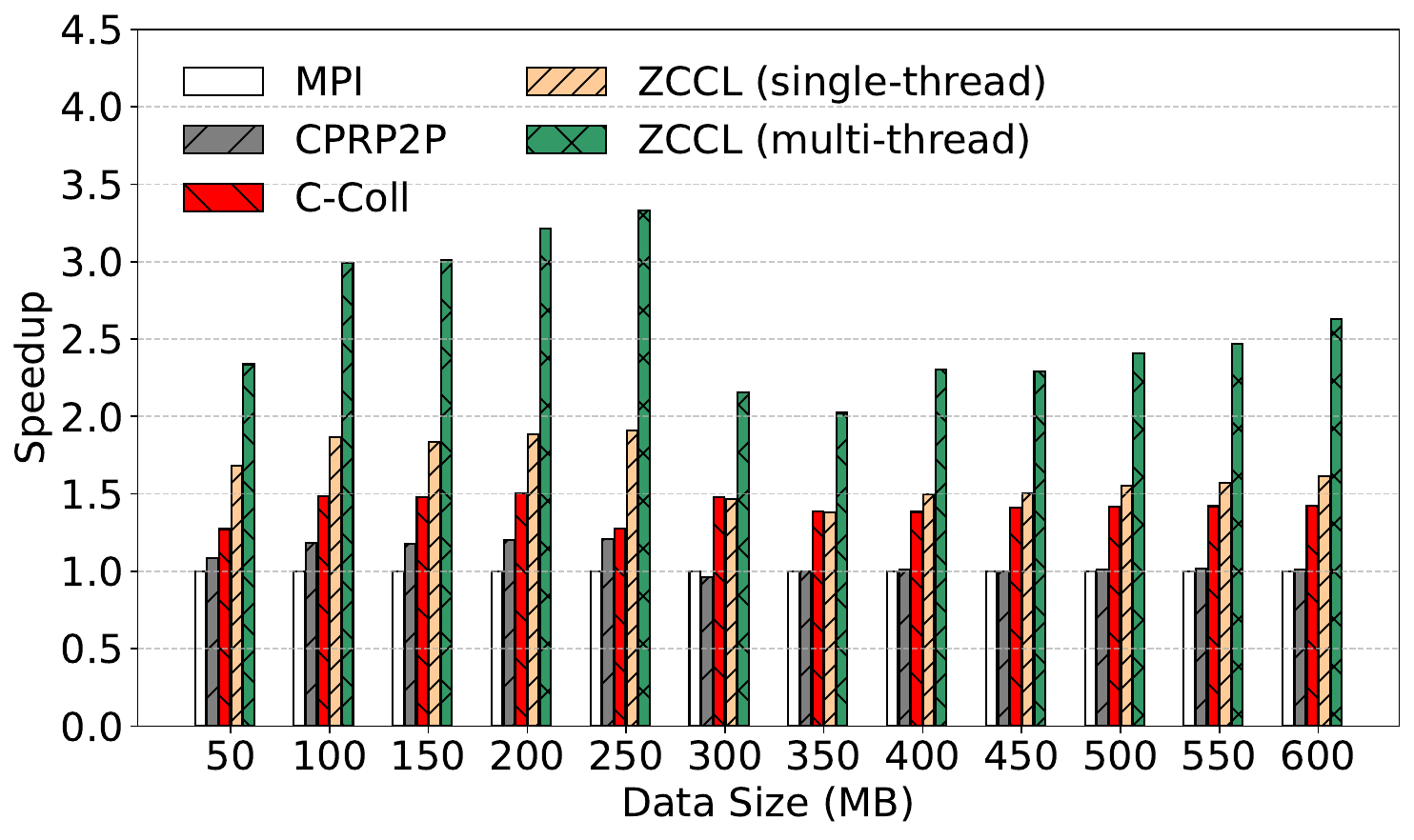}}
    \caption{Compare the performance of our {\pname}-accelerated Z-Allreduce and multiple baselines from 50 MB to 600 MB.} 
    \label{fig-64-sizes}
\end{figure}


\subsubsection{Evaluating with different data sizes}\label{sec-128-sizes}

In this section, we present the performance of our Z-Allreduce, along with related baselines, using data sizes ranging from 50 MB to 600 MB on 64 Broadwell nodes. As shown in Figure \ref{fig-64-sizes}, {\pname} consistently outperforms or matches the state-of-the-art C-Coll, achieving up to 1.50$\times$ and 2.69$\times$ performance improvements in single-thread and multi-thread modes, respectively. Compared with the CPRP2P method, {\pname} achieves even greater performance enhancements, with up to 1.64$\times$ and 2.88$\times$ speedups in single-thread and multi-thread modes. Additionally, {\pname} is up to 1.91$\times$ and 3.46$\times$ faster than MPI in the two modes. This high performance of {\pname} stems from reduced compression and communication overhead, thanks to our collective data movement and computation frameworks. Moreover, we carefully select and customize the {\fzlight} compressor, leading to higher performance than the SZx-integrated baseline C-Coll.

\begin{figure}[ht]
    \centering
    {\includegraphics[width=0.99\linewidth]{./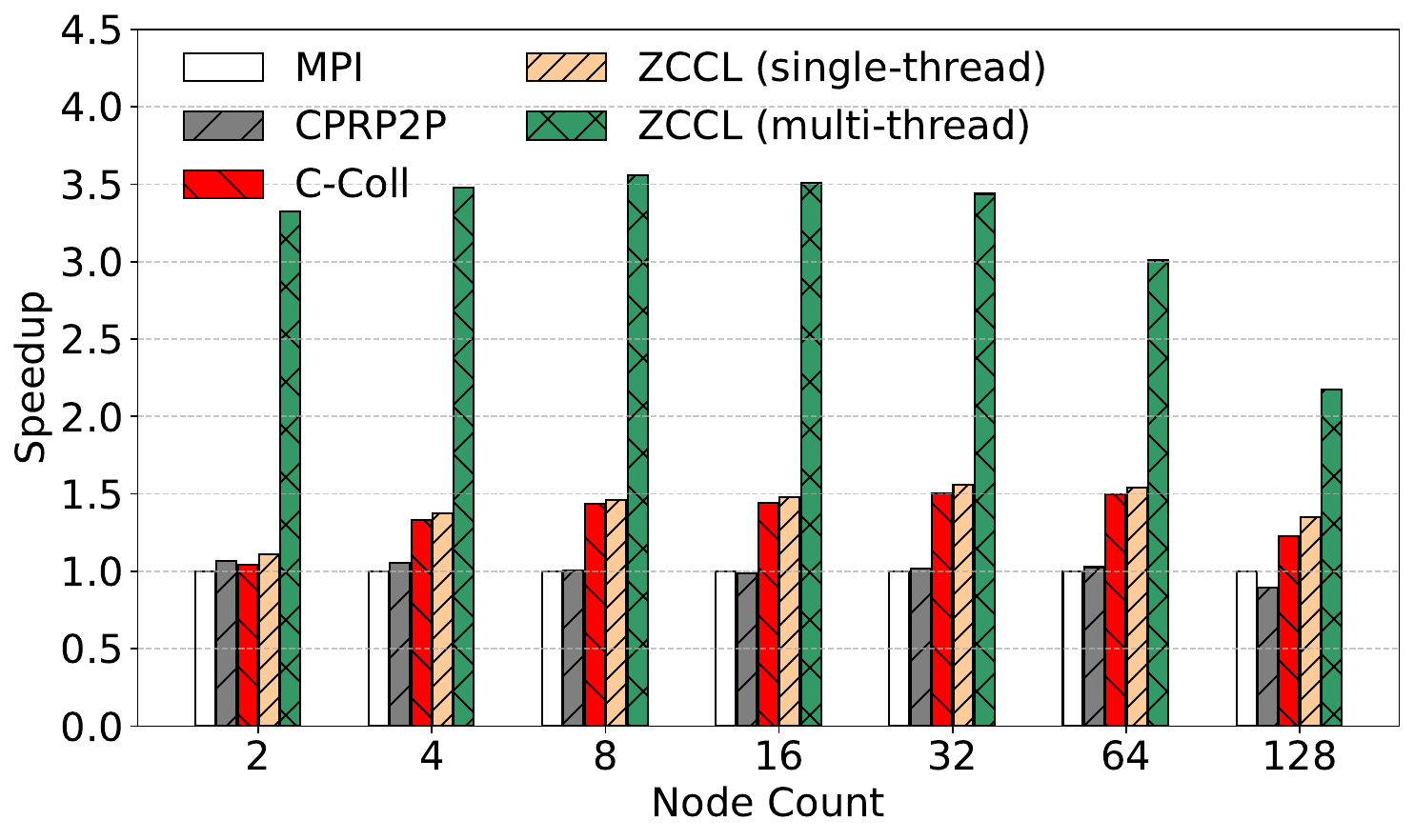}}
    \caption{Compare the performance of our {\pname}-accelerated Z-Allreduce and multiple baselines from 2 to 128 nodes.} 
    \label{fig-128-nodes}
\end{figure}

\subsubsection{Evaluating with different node counts}

To demonstrate the scalability of our approach, we compare the normalized execution time of our Z-Allreduce and four different baselines using a fixed data size of 678MB (the whole RTM dataset) across 2 to 128 nodes. As shown in Figure \ref{fig-128-nodes}, our {\pname} outperforms all the baselines across various node numbers. It can reach performance boosts of up to 1.56$\times$ and 3.56$\times$ in the single-thread and multi-thread versions compared to the MPI, respectively. Similar to our observation in Section \ref{sec-128-sizes}, we found that both the single-thread and multi-thread modes of {\pname} are outperforming the C-Coll framework, achieving 1.1$\times$ and 3.19$\times$ maximal performance improvements, respectively. Compared with the CPRP2P baseline, our {\pname} even reaches better speedups, with up-to 1.23$\times$ in the single-thread mode and 2.99$\times$ in the multi-thread mode. This scalability evaluation again confirms the high-performance of our designs and optimizations in the {\pname} framework.

\subsection{Generalizability Demonstration on Other MPI Collectives}

We have demonstrated the high performance of our {\pname}-accelerated Z-Allreduce, consisting of Z-Allgather and Z-Reduce-scatter. To showcase the generalizability of our frameworks and optimizations, we also present Z-Bcast and Z-Scatter, which utilize the ubiquitous binomial tree algorithm adopted by MPICH. We conduct experiments ranging from 50 MB to 600 MB using 64 Broadwell nodes.
\subsubsection{Broadcast}\label{sec-eval-bcast}

In Figure \ref{fig-portability-bcast}, we present the speedups of our Z-Bcast normalized against the original MPI\_Bcast. We also compare Z-Bcast with the C-Coll baseline. The experimental results show that {\pname} is 1.6$\times$ and 8.9$\times$ faster than MPI in single-thread and multi-thread modes, respectively. These performance improvements are originated from the reduced data transfer volume and minimized compression overheads provided by our framework. Additionally, {\pname} surpasses C-Coll, achieving up to 1.1$\times$ and 7.5$\times$ speedups in single-thread and multi-thread modes, respectively, demonstrating superior communication efficiency.
\begin{figure}[ht]
    \centering
    {\includegraphics[width=0.99\linewidth]{./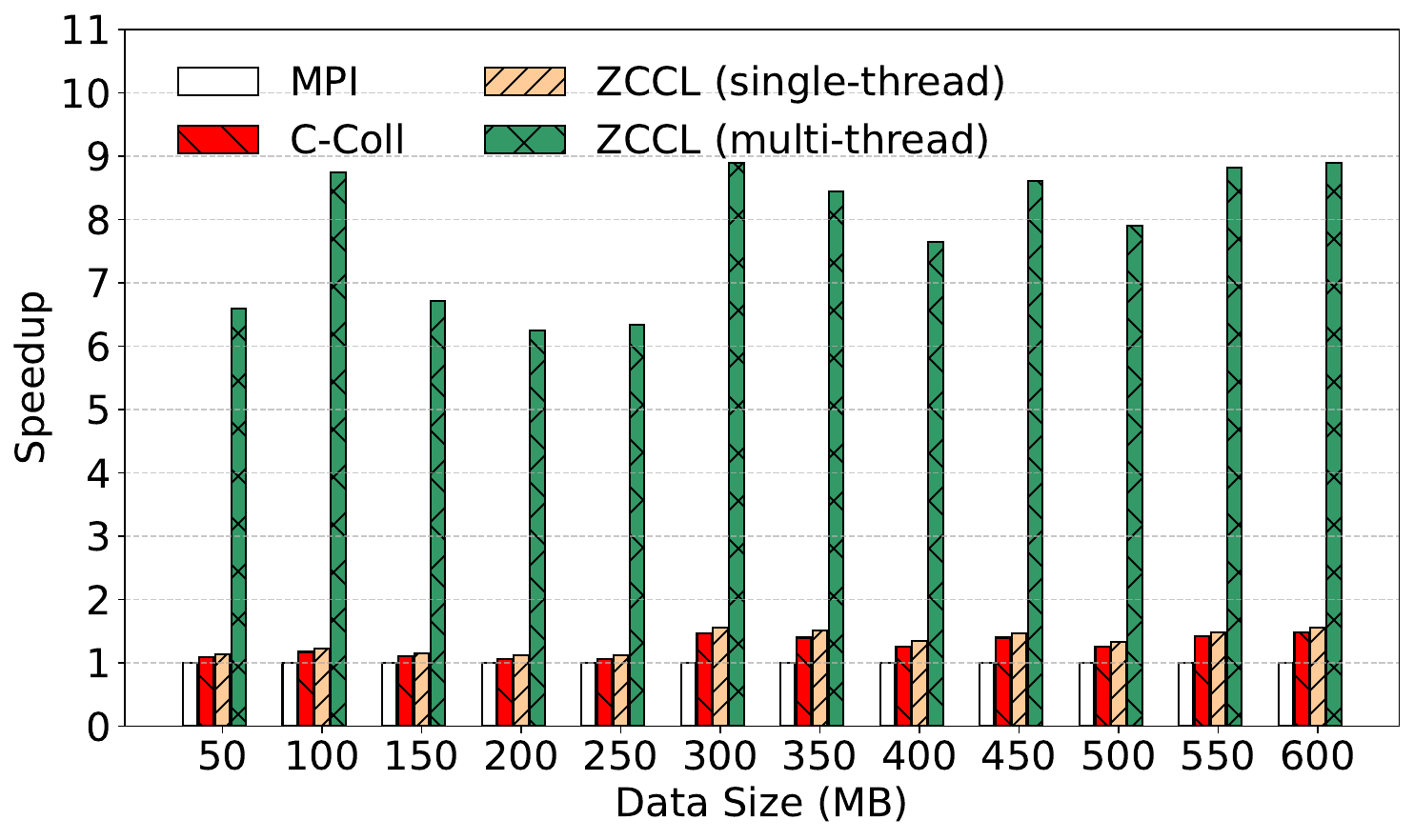}}
    \caption{Generalizability demonstration of our proposed framework and optimizations with Bcast from 50 MB to 600 MB.} 
    \label{fig-portability-bcast}
\end{figure}
\subsubsection{Scatter}\label{sec-eval-scatter}

We evaluate our Z-Scatter against MPI and C-Coll in Figure \ref{fig-portability-scatter}, and observe that {\pname} achieves the highest performance among all counterparts. It demonstrates 1.5$\times$ and 5.4$\times$ performance improvements over MPI, and 1.2$\times$ and 4.5$\times$ enhancements over C-Coll. These speedups are even more pronounced compared to our Z-Allreduce, as collective data movement benefits more from our framework than collective computation.

\begin{figure}[ht]
    \centering
    {\includegraphics[width=0.99\linewidth]{./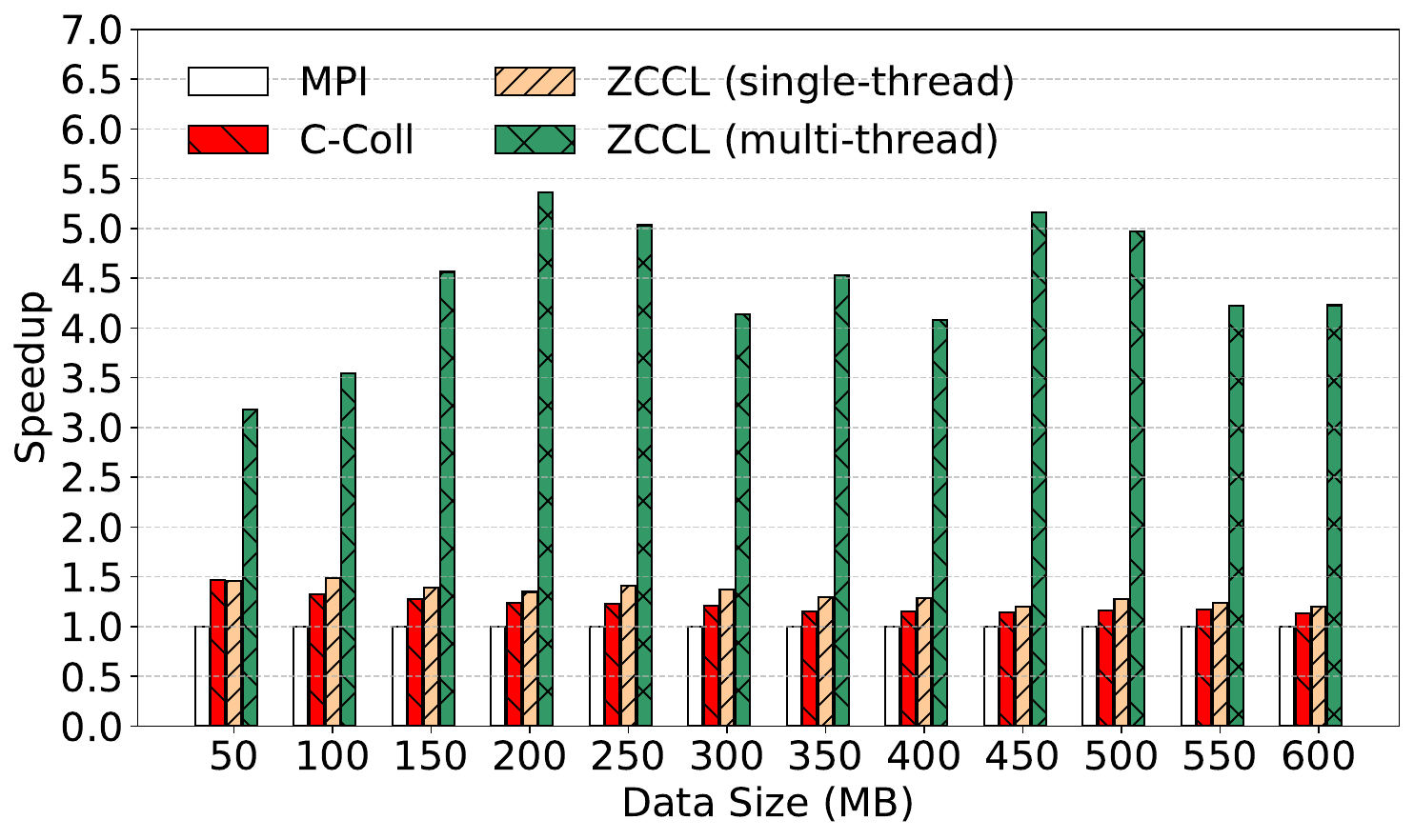}}
    \caption{Generalizability demonstration of our proposed framework and optimizations with Scatter from 50 MB to 600 MB.} 
    \label{fig-portability-scatter}
\end{figure}

\subsection{Evaluation of Image Stacking Performance and Accuracy}
\label{sec-image-stacking}

We use the image stacking application to evaluate both the performance and accuracy of our {\pname}. Image stacking is a widely used technique in scientific domains such as climate simulation and geology to generate high-quality images by combining multiple individual images. Researchers employ MPI to sum these images into final composite images \cite{Gurhem2021Kirchhoff}.

We show the performance results and validate the high quality of the stacked images generated under our compression-accelerated collective framework in the following texts. As shown in Table \ref{tab:image-stacking-perf}, for {\pname}, we observe speedups of 1.61$\times$ and 2.96$\times$ compared to MPI in single-thread and multi-thread modes, respectively. In contrast, the C-Coll baseline achieves only a 1.19$\times$ performance improvement, while the CPRP2P baseline lags behind the original MPI without compression. In CPRP2P, compression constitutes the majority of the overall runtime (63.12\%), highlighting its inefficiency in compression costs. This scenario is improved in the C-Coll framework, where compression takes a smaller proportion (53.47\%), and communication accounts for 34.24\%. In the single-thread mode of {\pname}, compression remains the largest contributor to runtime at 58.23\%, but this is still a sound improvement over CPRP2P, thanks to the significantly reduced compression time in {\pname}. Additionally, communication time decreases substantially compared to C-Coll, due to the higher compression ratio of {\fzlight} compared to SZx. In the multi-thread mode of {\pname}, compression costs are further reduced to only 23.18\%, with communication accounting for the largest proportion at 50.65\%. This is a result of the superior multi-thread compression performance of our {\pname} framework, boosted by the {\fzlight} compressor.

\begin{table}[]
\caption{Performance comparison and breakdown of image stacking (The speedup is based on MPI. The last four columns are performance breakdowns).}
\label{tab:image-stacking-perf}
\resizebox{\columnwidth}{!}{%
\begin{tabular}{@{}c|c|cccc@{}}
\toprule
                              & \textbf{Speedup} & \textbf{Compre.} & \textbf{Commu.} & \textbf{Comput.} & \textbf{Other} \\ \midrule
\textbf{CPRP2P (baseline)}               & 0.95             & 63.12\%          & 28.43\%         & 8.38\%           & 0.06\%         \\
\textbf{C-Coll (baseline)}               & 1.39             & 53.47\%          & 34.24\%         & 12.17\%          & 0.11\%         \\ \midrule
\textbf{{\pname} (single-thread)} & 1.61             & 58.23\%          & 27.57\%         & 14.05\%          & 0.15\%         \\
\textbf{{\pname} (multi-thread)}  & 2.96             & 23.18\%          & 50.65\%         & 25.87\%          & 0.30\%         \\ \bottomrule
\end{tabular}%
}
\end{table}

Apart from the performance analysis, we also evaluate the numeric and visual accuracy of our {\pname}. With an error bound of 1E-4, {\pname} achieves a Peak Signal-to-Noise Ratio (PSNR)~\cite{PSNR} of 49.1 and a Normalized Root Mean Square Error (NRMSE)~\cite{shcherbakov2013errormetrics} of 3.5E-3, demonstrating excellent data quality. In Figure \ref{fig-image-stacking-quality}, we compare the visual accuracy of the {\pname} framework against the original MPI without compression. The comparison shows no visual difference between the two methods, further confirming the high accuracy of our {\pname} framework. In summary, {\pname} achieves sound speedups over the baselines while maintaining high data accuracy.

\begin{figure}[ht]
    \centering
    \subfloat[MPI (lossless)]{
        \includegraphics[scale=0.25]{./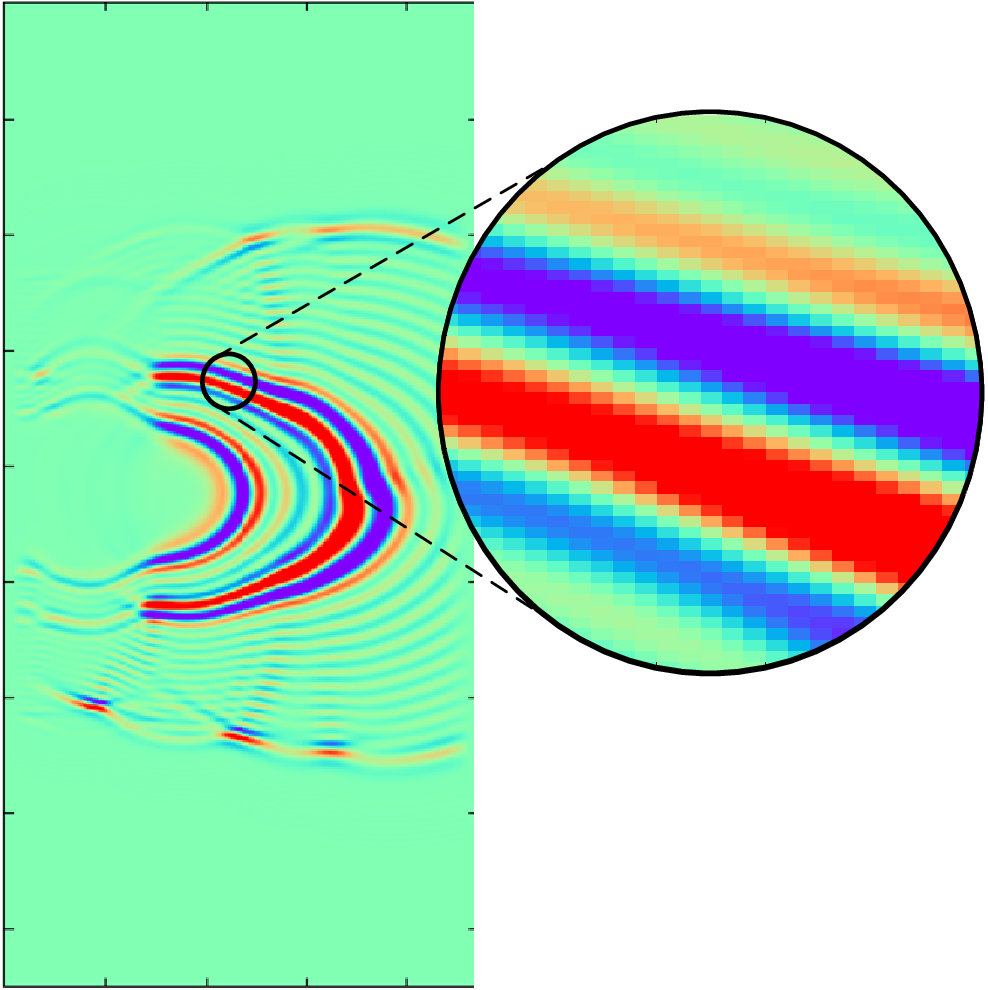}
        \label{fig-image-stacking-ori}
    }
    \hspace{-2mm}
    \subfloat[{\pname}]{
        \includegraphics[scale=0.25]{./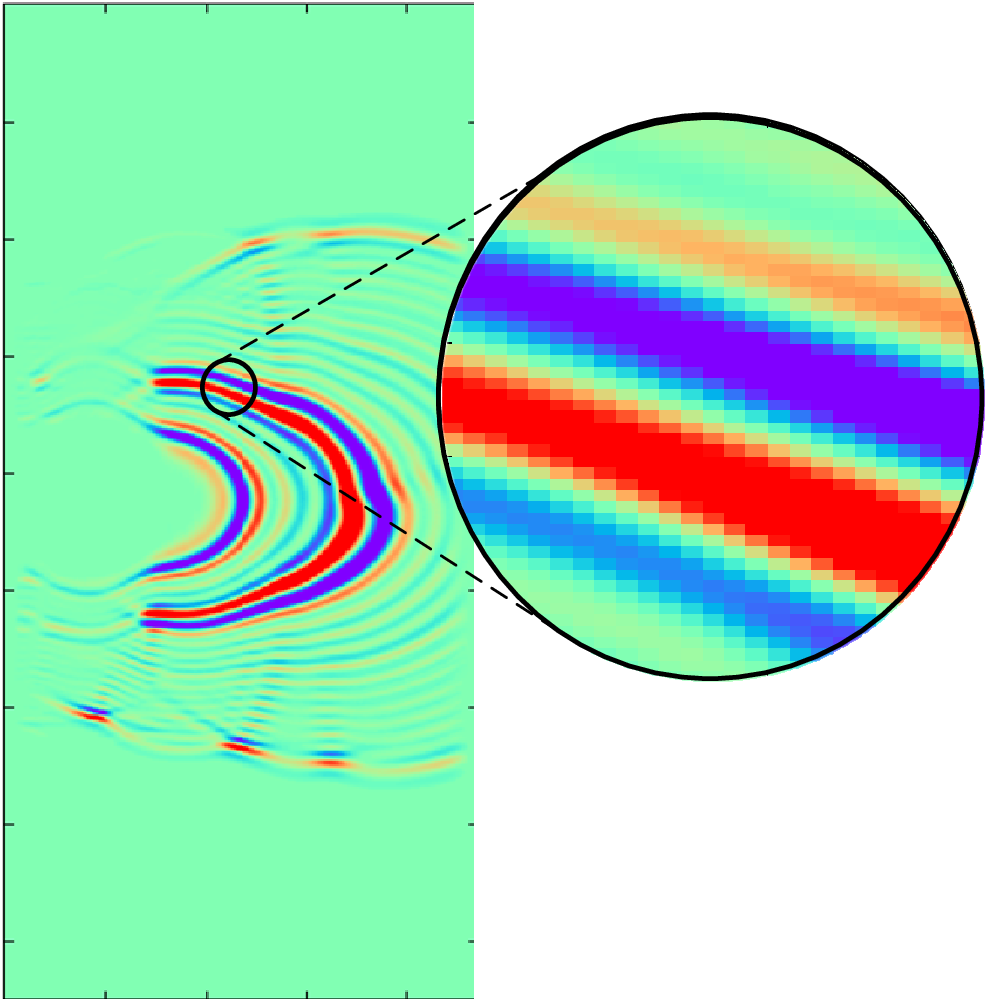}
        \label{fig-image-stacking-lossy}
    }
    \caption{Visualization of image stacking application.}
    \label{fig-image-stacking-quality}
\end{figure}


\section{Conclusion and Future Work}
\label{sec:conclusion}

In this paper, we introduce {\pname}, a novel design for lossy-compression-integrated MPI collectives that significantly improves performance with bounded errors. Our two proposed high-performance frameworks for compression-integrated MPI collectives, together with optimized and customized pipe-lined {\fzlight}, enable us to implement Z-Allreduce, which outperforms the original Allreduce by up to \textbf{3.6$\times$} while preserving high data quality. We demonstrate the generalizability of our approaches through Z-Scatter and Z-Bcast, which outperform the original MPI\_Scatter and MPI\_Bcast by up to \textbf{5.4$\times$} and \textbf{8.9$\times$}, respectively. In summary, our research has addressed the issues of sub-optimal performance, lack of generality, and unbounded errors in lossy-compression-integrated MPI collectives, laying the foundation for future research in this area. Moving forward, we plan to expand our research by implementing more {\pname} based collectives and deploying our frameworks and optimizations on other hardware, such as GPUs and AI accelerators.
\section{Acknowledgment}
This research was supported by the U.S. Department of Energy, Office of Science, Advanced Scientific Computing Research (ASCR), under contract DE-AC02-06CH11357, and supported by the National Science Foundation under Grant OAC-2003709, OAC-2104023, and OAC-2311875. The authors extend their appreciation to the Deanship of Scientific
Research at University of Bisha, Saudi Arabia for supporting this research work through the Promising
Program under Grant Number (UB- Promising -40 - 1445). The experimental resource for this paper was provided by the Laboratory Computing Resource Center on the Bebop cluster at Argonne National Laboratory.


\begin{IEEEbiography}
[{\includegraphics[width=1in,height=1.25in,clip,keepaspectratio]{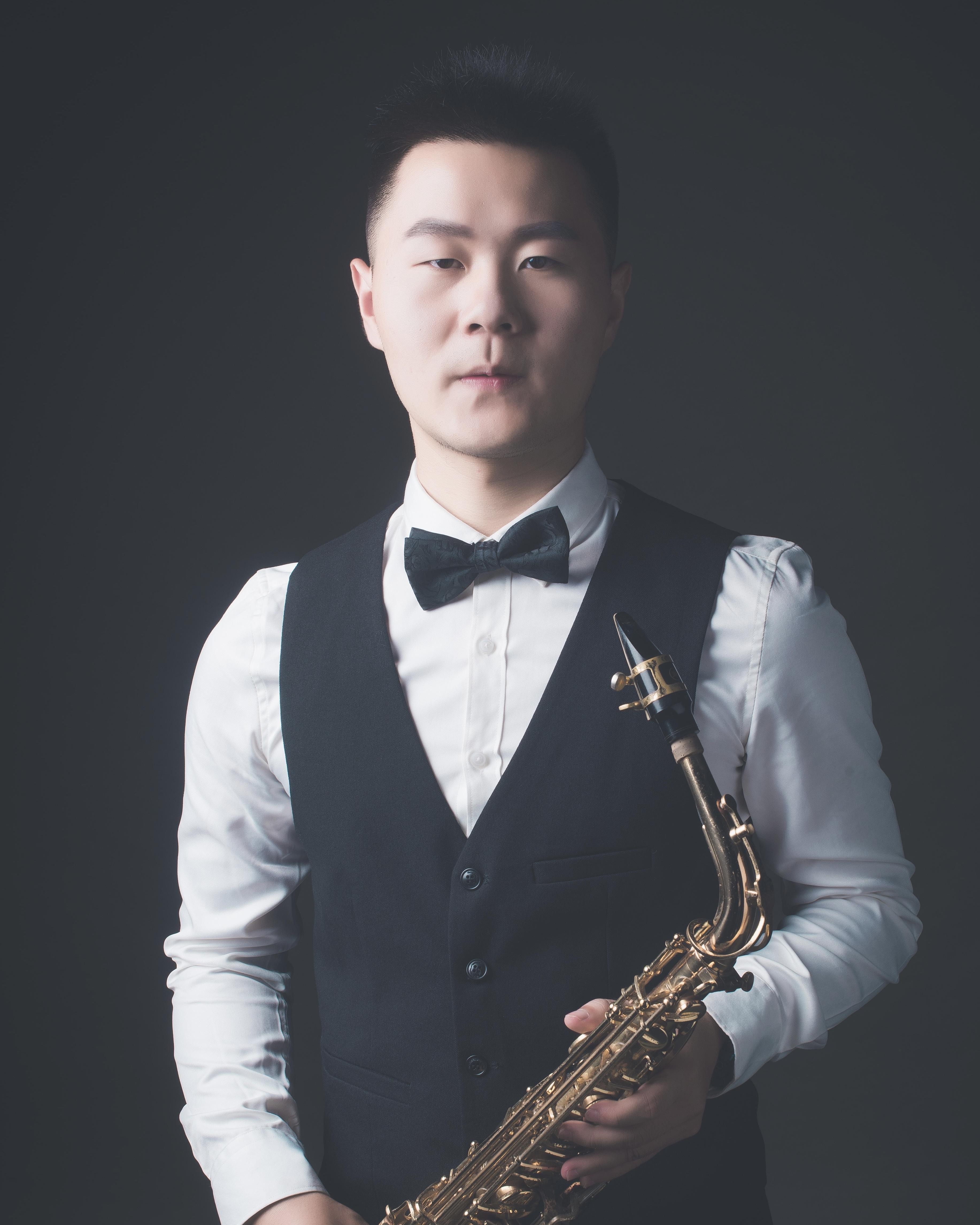}}]{Jiajun Huang} is a Ph.D. candidate in Computer Science at the University of California, Riverside, and a long-term visiting student at Argonne National Laboratory. He received his bachelor's degree in Electronic Information Engineering from the University of Electronic Science and Technology of China (UESTC) and the University of Glasgow (Honors of the First Class), in 2021. His research interests include distributed and parallel computing/systems, high-performance computing, and general artificial intelligence. Email: jhuan380@ucr.edu
\end{IEEEbiography}
\vskip -2\baselineskip plus -1fil

\begin{IEEEbiography}[{\includegraphics[width=1in,height=1.25in,clip,keepaspectratio]{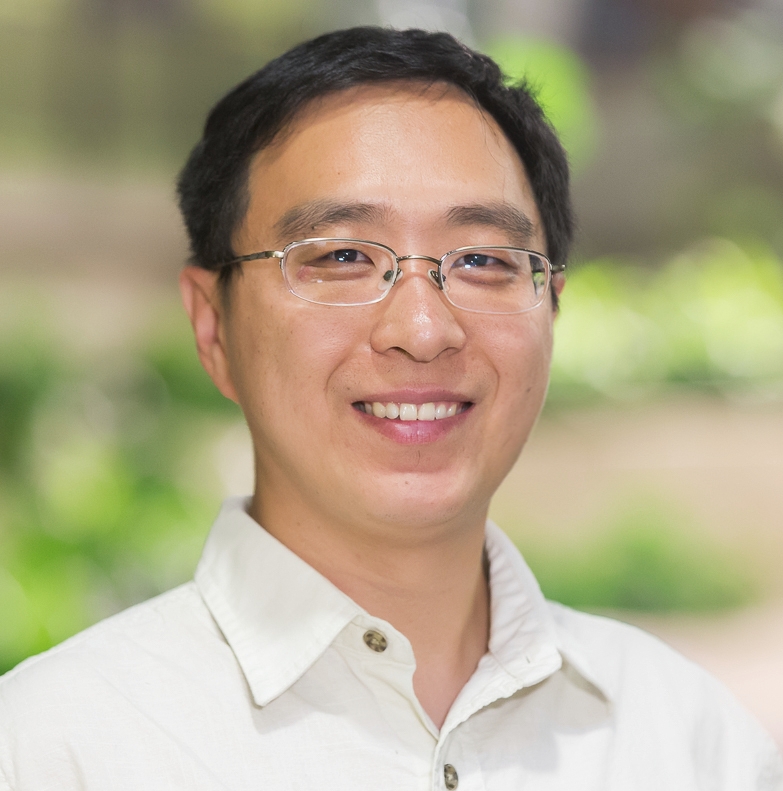}}]{Sheng Di}
(Senior Member, IEEE) received his master's degree from Huazhong University of Science and Technology in 2007 and Ph.D. degree from the University of Hong Kong in 2011. He is currently a computer scientist at Argonne National Laboratory. 
His research interests involve resilience on high-performance computing (such as silent data corruption, optimization checkpoint model, and in situ data compression) and broad research topics on cloud computing.
He is working on multiple HPC projects, such as detection of silent data corruption, characterization of failures and faults for HPC systems, and optimization of multilevel checkpoint models. He is the recipient of a DOE 2021 Early Career Research Program award. Email: sdi@anl.gov.
\end{IEEEbiography}

\vskip -2\baselineskip plus -1fil

\begin{IEEEbiography}[{\includegraphics[width=1in,height=1.25in,clip,keepaspectratio]{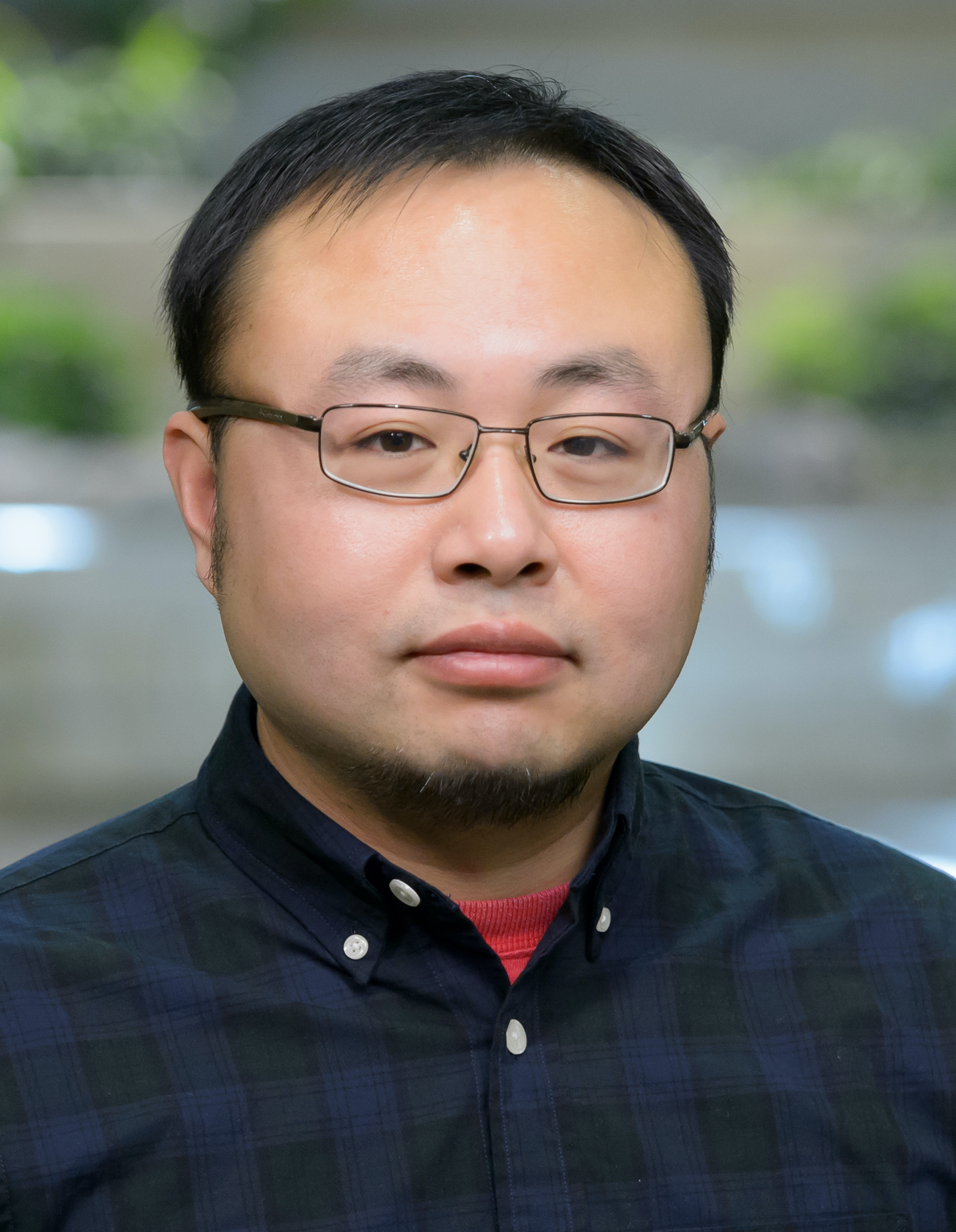}}]{Xiaodong Yu}
is an Assistant Professor in the Computer Science Department at Stevens Institute of Technology. He was an Assistant Computer Scientist at Argonne National Laboratory. He earned his Ph.D. in Computer Science from Virginia Tech in 2019. His research areas span parallel and distributed computing, next-generation AI hardware, and machine learning privacy and security. Email: xyu38@stevens.edu.
\end{IEEEbiography}

\vskip -2\baselineskip plus -1fil

\begin{IEEEbiography}[{\includegraphics[width=1in,height=1.25in,clip,keepaspectratio]{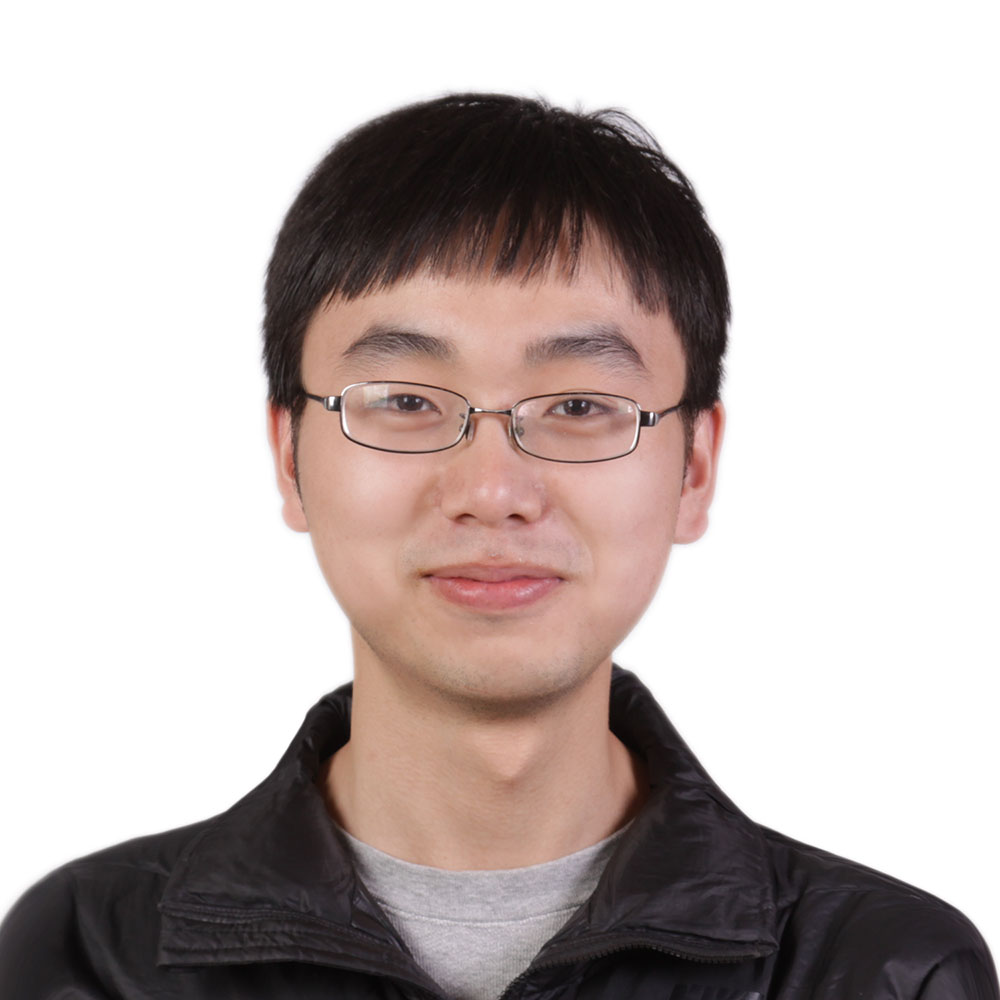}}]{Yujia Zhai} received his bachelor's degree from the University of Science and Technology of China in 2016, a master's degree from Duke University in 2018, and a Ph.D. degree from the University of California, Riverside in 2023. He is interested in performance optimization for math libraries on GPUs. Email: yzhai015@ucr.edu.
\end{IEEEbiography}

\vskip -2\baselineskip plus -1fil

\begin{IEEEbiography}[{\includegraphics[width=1in,height=1.25in,clip,keepaspectratio]{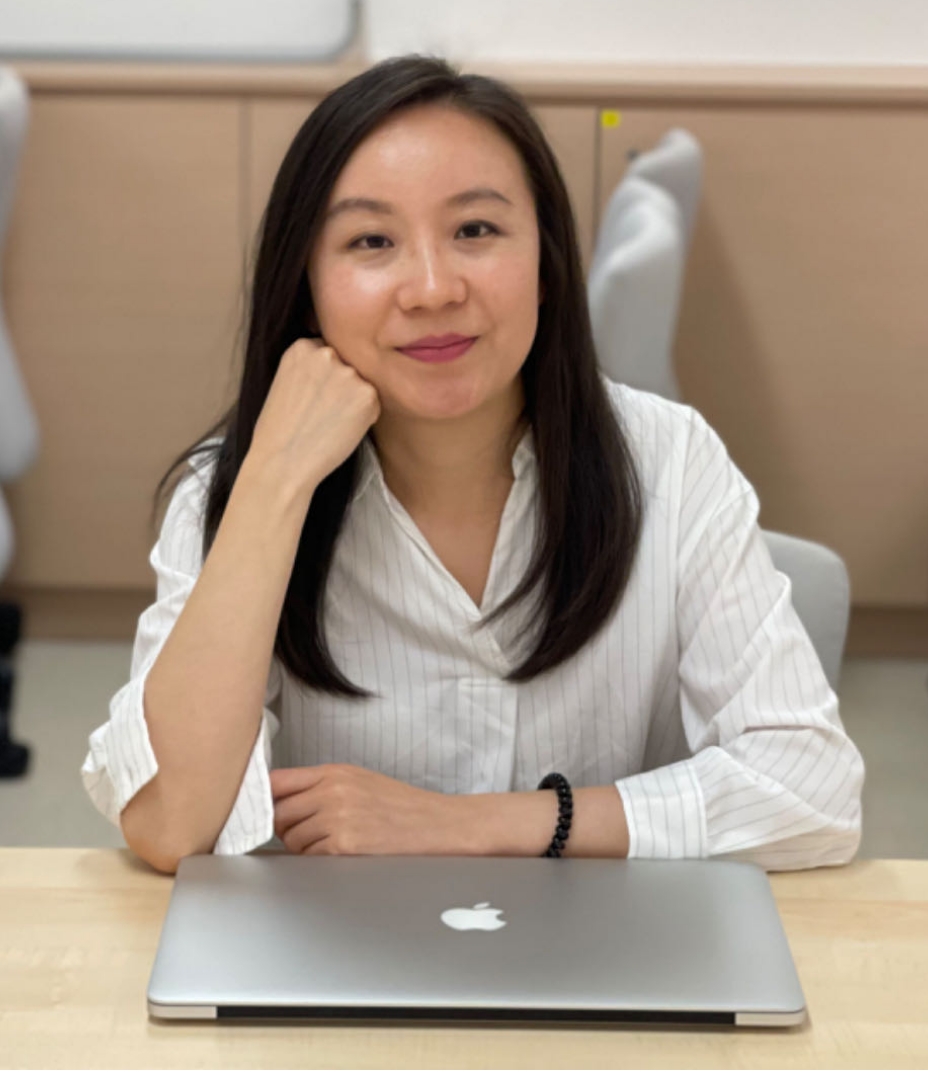}}]{Zhaorui Zhang} is currently a research assistant professor in the Department of Computing at The Hong Kong Polytechnic University. She received her Ph.D. from the Department of Computer Science at The University of Hong Kong, Hong Kong, and her BSc degree in computer science from Xi'an Jiaotong University. Her research interests include distributed machine learning systems, distributed systems, HPC, cloud computing, and data reduction. Email: zhaorui.zhang@polyu.edu.hk.
\end{IEEEbiography}

\vskip -2\baselineskip plus -1fil

\begin{IEEEbiography}[{\includegraphics[width=1in,height=1.25in,clip,keepaspectratio]{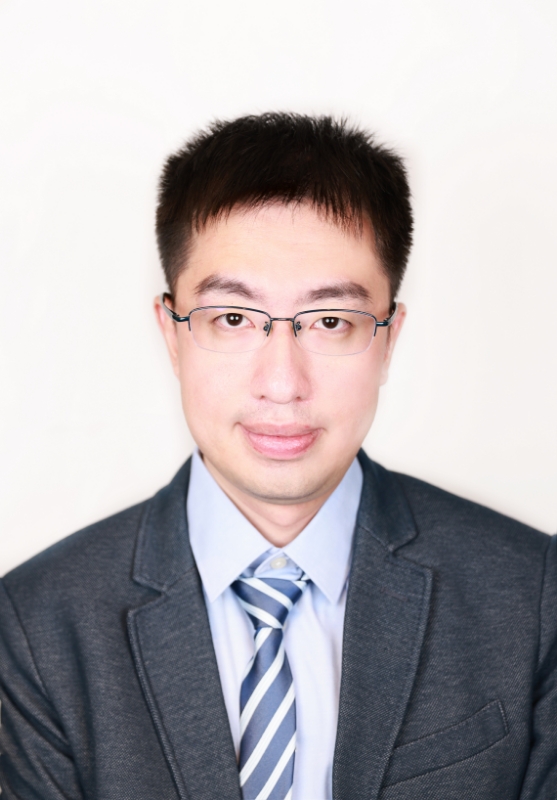}}]{Jinyang Liu} is an assistant professor at the department of Computer Science in the University of Houston. Jinyang’s research lies in the interdisciplinary areas of High-Performance Computing, Scientific Data Management, and Artificial Intelligence. He has multiple published or accepted works in various highly prestigious conferences and journals such as ACM SIGMOD, IEEE/ACM SC, ACM ICS (one paper in the best paper finalist), IEEE ICDE, IEEE Cluster, IEEE BigData, IEEE TPDS, etc.
\end{IEEEbiography}

\vskip -2\baselineskip plus -1fil

\begin{IEEEbiography}[{\includegraphics[width=1in,height=1.25in,clip,keepaspectratio]{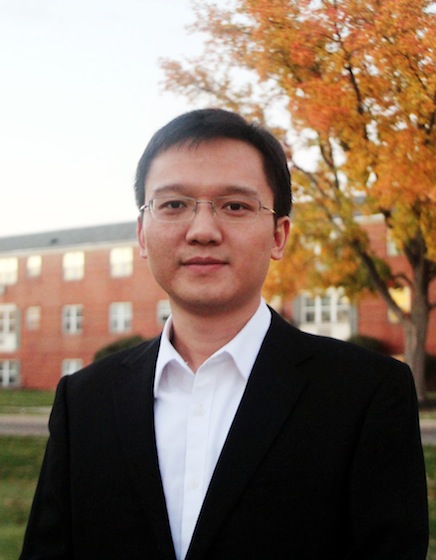}}]{Xiaoyi Lu} is an Associate Professor in the Department of Computer Science and Engineering at the University of California, Merced. His current research interests include parallel and distributed computing, high-performance networking and I/O technologies, big data analytics, cloud computing, and deep learning. He has published one book and more than 150 papers in prestigious international conferences, workshops, and journals with multiple Best (Student) Paper Awards or Nominations. Dr. Lu has received the NSF CAREER Award and other research awards from Meta, Amazon, and Google. Email:  xiaoyi.lu@ucmerced.edu.
\end{IEEEbiography}

\vskip -2\baselineskip plus -1fil

\begin{IEEEbiography}[{\includegraphics[width=1in,height=1.25in,clip,keepaspectratio]{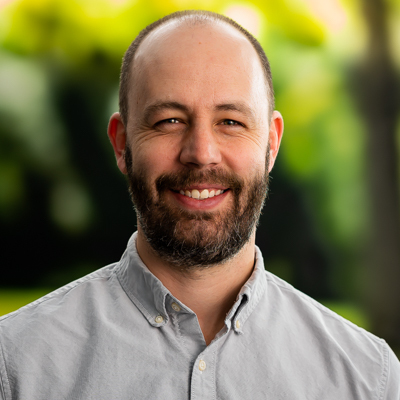}}]{Ken Raffenetti} is a Principal Software Development Specialist in the Programming Models and Runtime Systems group at Argonne National Laboratory. He is a core MPICH developer and active participant in several HPC industry working groups. Prior to Argonne, Ken earned a B.S. in Computer Science from the University of Illinois Urbana-Champaign.
\end{IEEEbiography}

\vskip -2\baselineskip plus -1fil

\begin{IEEEbiography}[{\includegraphics[width=1in,height=1.25in,clip,keepaspectratio]{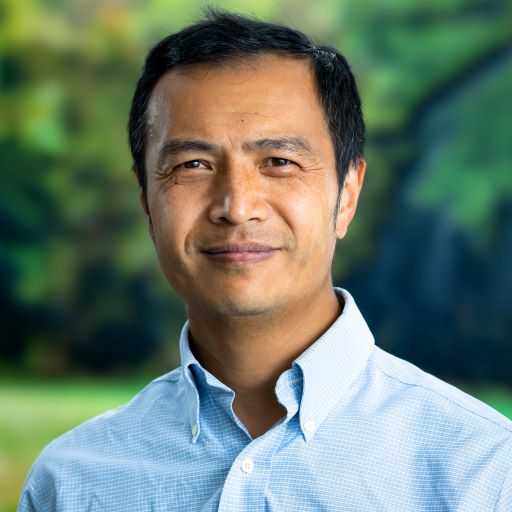}}]{Hui Zhou} is a Principal Research Software Engineer at Argonne National Laboratory and a core member of the MPICH development team. His research focuses on runtime systems for high-performance computing, accessible parallel computing, and scalable software development. He has a particular interest in enhancing interoperability between runtime systems.
\end{IEEEbiography}

\vskip -2\baselineskip plus -1fil

\begin{IEEEbiography}[{\includegraphics[width=1in,height=1.25in,clip,keepaspectratio]{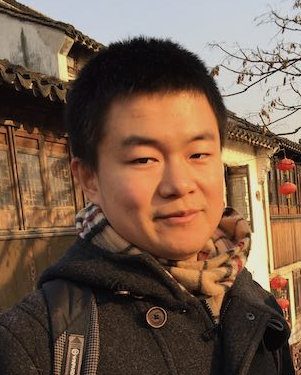}}]{Kai Zhao} is an assistant professor in the computer science department at Florida State University. He
received his bachelor's degree from Peking University in 2014 and his Ph.D. degree from the University of California, Riverside in 2022.
His research interests include high-performance computing and scientific data management. Email: kai.zhao@fsu.edu.
\end{IEEEbiography}

\vskip -2\baselineskip plus -1fil

\begin{IEEEbiography}[{\includegraphics[width=0.9in,height=1.25in,clip,keepaspectratio]{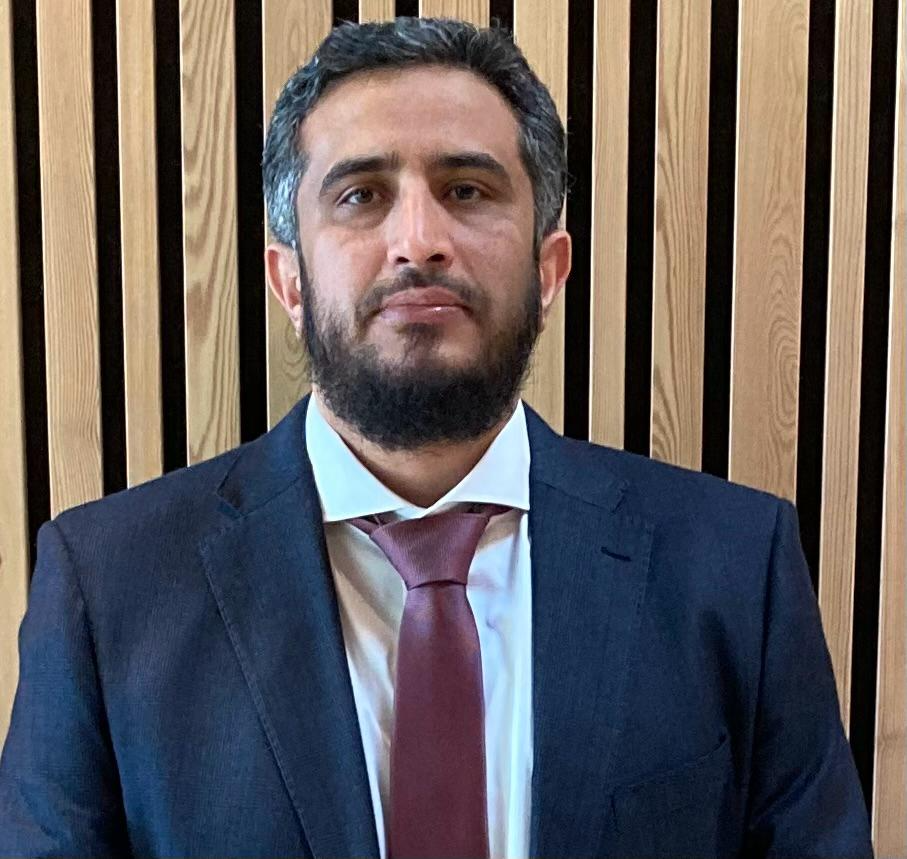}}]{Khalid Ayed Alharthi} is an assistant professor at the University of Bisha, KSA. He received his bachelor's degree from King Khalid University in 2008, his master's degree from Kent State University, USA in 2013, and his Ph.D. from the University of Warwick, UK in 2023. He is a long-term intern at UChicago Argonne National Laboratory and the Alan Turing Institute, UK. His research interests include AI, NLP, and AI in supporting resilience in HPC systems. Email: kharthi@ub.edu.sa.
\end{IEEEbiography}

\vskip -2\baselineskip plus -1fil

\begin{IEEEbiography}[{\includegraphics[width=1in,height=1.25in,clip,keepaspectratio]{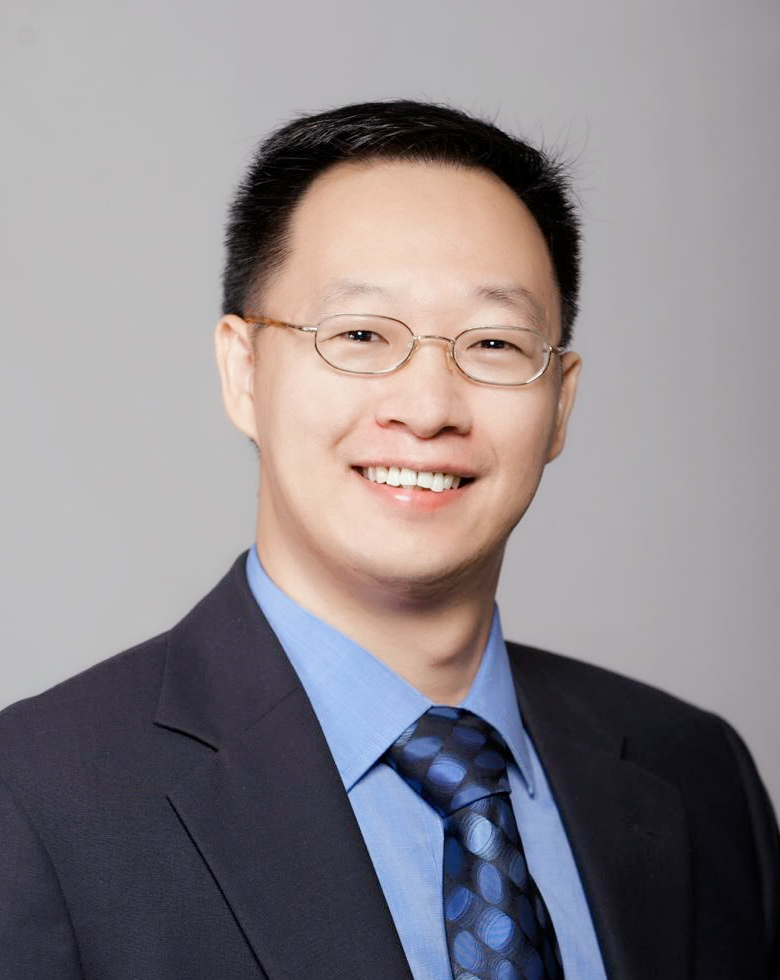}}]{Zizhong Chen}
(Senior Member, IEEE) received a bachelor's degree in mathematics from Beijing Normal University, a master's degree in economics from the Renmin University of China, and a Ph.D. degree in computer science from the University of Tennessee, Knoxville. He is a professor of computer science at the University of California, Riverside. 
His research interests include high-performance computing, parallel and distributed systems, big data analytics, cluster and cloud computing, algorithm-based fault tolerance, power and energy efficient computing, numerical algorithms and software, and large-scale computer simulations. His research has been supported by the National Science Foundation, Department of Energy, CMG Reservoir Simulation Foundation, Abu Dhabi National Oil Company, Nvidia, and Microsoft Corporation. 
He received a CAREER Award from the US National Science Foundation and a Best Paper Award from the International Supercomputing Conference. He is a Senior Member of the IEEE and a Life Member of the ACM. Email: chen@cs.ucr.edu.
\end{IEEEbiography}

\vskip -2\baselineskip plus -1fil

\begin{IEEEbiography}[{\includegraphics[width=1in,height=1.25in,clip,keepaspectratio]{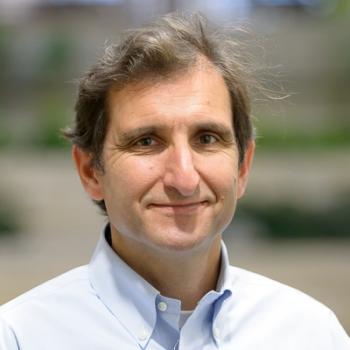}}]{Franck Cappello}
is the director of the Joint-Laboratory on Extreme Scale Computing gathering six of the leading high-performance computing institutions in the world: Argonne National Laboratory, National Center for Scientific Applications, Inria, Barcelona Supercomputing Center, Julich Supercomputing Center, and Riken AICS. He is a senior computer scientist at Argonne National Laboratory and an adjunct associate professor in the Department of Computer Science at the University of Illinois at Urbana-Champaign. He is an expert in resilience and fault tolerance for scientific computing and data analytics. Recently he started investigating lossy compression for scientific datasets to respond to the pressing needs of scientist performing large-scale simulations and experiments. His contribution to this domain is one of the best lossy compressors for scientific datasets respecting user-set error bounds. He is a member of the editorial board of the \textit{IEEE Transactions on Parallel and Distributed Computing} and of the \textit{ACM HPDC} and \textit{IEEE CCGRID} steering committees. He is a fellow of the IEEE. Email: cappello@mcs.anl.gov.
\end{IEEEbiography}

\vskip -2\baselineskip plus -1fil

\begin{IEEEbiography}[{\includegraphics[width=1in,height=1.25in,clip,keepaspectratio]{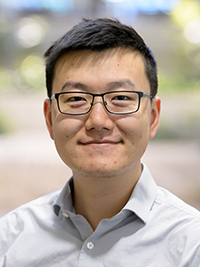}}] {Yanfei Guo} holds an appointment as an Computer Scientist
at the Argonne National Laboratory. He is a member of the Programming
Models and the Runtime Systems group. He has been working on multiple
software projects including MPI, Yaksa and OSHMPI. His research
interests include parallel programming models and runtime systems in
extreme-scale supercomputing systems, data-intensive computing and cloud
computing systems. Yanfei has received the best paper award at the
USENIX International Conference on Autonomic Computing 2013 (ICAC'13).
His work on programming models and runtime systems has been published on
peer-reviewed conferences and journals including the ACM/IEEE
Supercomputing Conference (SC'14, SC'15) and IEEE Transactions on
Parallel and Distributed Systems (TPDS). Yanfei have delivered eight
tutorials on MPI to various audience levels from university students to
researchers. Yanfei served as reviewers and technical committee members
in many journals and conferences. He is a member of the IEEE and a
member the ACM.
\end{IEEEbiography}

\vskip -2\baselineskip plus -1fil

\begin{IEEEbiography}[{\includegraphics[width=1in,height=1.25in,clip,keepaspectratio]{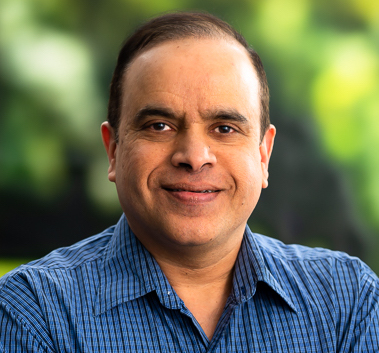}}]{Rajeev Thakur} is an Argonne Distinguished Fellow and Deputy Director of the Data Science and Learning Division at Argonne National Laboratory. He received a Ph.D. in Computer Engineering from Syracuse University. His research interests are in high-performance computing, parallel programming models, runtime systems, communication libraries, scalable parallel I/O, and artificial intelligence and machine learning. He is a Fellow of IEEE.
\end{IEEEbiography}

\end{document}